\newtheorem{thm}{Theorem}[section]
\newtheorem{lemma}[thm]{Lemma}
\newtheorem{proposition}[thm]{Proposition}
\newtheorem{definition}[thm]{Definition}
\newtheorem{example}[thm]{Example}
\newtheorem{remark}[thm]{Remark}
\title{A post-quantum key exchange protocol from the intersection of quadric surfaces.}
\author{Daniele Di Tullio \hspace{50pt} Manoj Gyawali\thanks{This author is supported by INdAM Fellowship Programs in Mathematics and/or Applications cofunded by Marie Skłodowska-Curie Actions.}}
\providecommand{\keywords}[1]{\textbf{\textit{Keywords:}} #1}
\begin{document}
	\title{A post-quantum key exchange protocol from the intersection of quadric surfaces}
	\maketitle              
	\begin{center}
		\textit{Universit\`{a} degli Studi di Roma Tre, Department of Mathematics. Largo S. Leonardo Murialdo 1, Rome, Italy.}\\
		\href{mailto:danieleditullio@hotmail.it}{\small\texttt{ danieleditullio@hotmail.it}}\\
		\href{mailto:manoj.gyawali@ncit.edu.np}{\small\texttt{manoj.gyawali@ncit.edu.np}}
	\end{center}
	\begin{abstract}
		In this paper we present a key exchange protocol in which Alice and Bob have secret keys given by quadric surfaces embedded in a large ambient space by means of the Veronese embedding and public keys given by hyperplanes containing the embedded quadrics. Both of them reconstruct the isomorphism class of the intersection which is a curve of genus 1, which is uniquely determined by the $j$-invariant. An eavesdropper, to find this $j$-invariant, has to solve problems which are conjecturally quantum resistant.  
		
		\keywords{Quadric surfaces  \and Veronese embedding \and Segre embedding \and Post-quantum cryptography.}
	\end{abstract}
	\section{Introduction}
	Bringing difficult mathematical problems to cryptography is required not only to connect abstract mathematics to the real world applications but also to make cryptography stronger and applicable. Many classical mathematical problems like factorization and discrete logarithm are  vulnerable to quantum attack after the algorithm by Shor \cite{shor} in 1994.
	The algorithm by Shor created a threat to the cryptographic world and then the necessity of the post-quantum system was realized. In 2016, the United States government agency National Institute of Standards and Technology (NIST) put a call for new post-quantum cryptographic algorithms to systematize the post-quantum candidates in near future  \cite{nist1} and in 2019 declared the 17 candidates for  public-key encryption and key-establishment algorithms and 9 candidates for digital signatures \cite{nist2} based on various mathematical problems. Currently, there are five major post-quantum areas of research are carried out, four of them are discussed in \cite{post} including lattice-based cryptography based on lattice problems, code-based cryptography based on decoding a generic linear code, which is an NP-complete problem \cite{code}, multivariate  cryptography based  on the difficulty of inverting a multivariate quadratic map or equivalently to solving a set of quadratic equations
	over a finite field which is an NP-hard problem, hash-based  cryptography  based on one way hash functions and isogeny based cryptography based on isogeny problems, see for ex. \cite{sidh,csidh}.\\
	In this paper, we propose a key exchange protocol whose security relies on various problems in computational algebraic geometry, like solving large system of polynomial equations with high degree in many variables, or finding the primary decomposition of an ideal generated by many polynomials in many variables, which we conjecture to be quantum-safe problems.\\
	In a nutshell: Alice chooses a quadric surface embedded in a large projective space by the means of the Segre and the Veronese map. She gives some information like an embedding and an automorphism of the variety so that Bob can generate an embedding which is required to agree on a common key. Both Bob and Alice have their respective embeddings by which they hide their secret quadric surfaces, instead they publish their corresponding hyperplanes containing the images of their respective embeddings. Now, by using their private embeddings they compute the pullback of each other's hyperplanes, recover a $ (2,2)$ homogeneous curve and finally compute the $ j $-invariant of the components. Under some heuristic assumptions, both parties are able to get such components with high probability. The $ j $-invariants are equal, which is the common keys for both Alice and Bob. Notwithstanding the availability of the public data, an attacker is not able to recover information on private data because of the assumptions on the underlying problems. \\
	
	In section 2  and section 3 we recall some terminologies that are used everywhere in this paper. In section 4 we give a key exchange protocol called Quadratic Surface Intersection (QSI) key exchange and a variant of it. In section 5 we present the QSI key exchange protocol in a scenario of a trusted third party. In section 6 we discuss some underlying mathematical problems and hardness assumptions. We also give appendices to fulfill some extra arguments.
	\section{Intersection of quadric surfaces}
	The intersection of two quadric surfaces in $ \mathbb{P}^{3} $ is a curve of degree 4. 
	The geometric properties of this curve are well known. In this section, we have taken most of the  terminologies from \cite{Ravi Vakil,Salmon,Shafarevich} unless otherwise stated. 
	\begin{proposition} \label{prop}
		Let $ \kappa $ be an algebraically closed field. Then, for a general choice of two quadric surfaces $ Q_{1},Q_{2}\subset\mathbb{P}^{3}_{\kappa} $, $ Q_{1} \cap Q_{2}$ is a smooth curve of genus 1. 
		\label{prop1}
	\end{proposition}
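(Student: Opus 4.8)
The plan is to realize $C := Q_1 \cap Q_2$ as a complete intersection curve and to split the argument into a genericity step (smoothness) and a cohomological step (genus $1$). Parametrize the quadric surfaces of $\mathbb{P}^3_\kappa$ by $\mathbb{P}^9 = \mathbb{P}(H^0(\mathbb{P}^3_\kappa,\mathcal{O}(2)))$, so that pairs $(Q_1,Q_2)$ range over $\mathbb{P}^9\times\mathbb{P}^9$. First I would remove the closed loci where $Q_1=Q_2$ or where $Q_1,Q_2$ share a common component; over the complement the universal intersection is a flat family of one-dimensional complete intersections, so the set of pairs for which $C$ is a \emph{smooth} curve is Zariski open, and this is what ``general'' will mean. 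Hence it suffices to (i) produce one pair with $C$ smooth, and (ii) compute the genus of an arbitrary smooth $C$, which by flatness coincides with the arithmetic genus of every member.

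For step (i) I would take $Q_1$ smooth --- a general quadric is smooth, the singular ones forming the degree-$4$ discriminant hypersurface $\{\det=0\}\subset\mathbb{P}^9$ --- and use the Segre identification $Q_1\cong\mathbb{P}^1\times\mathbb{P}^1$, under which $\mathcal{O}_{\mathbb{P}^3}(1)|_{Q_1}=\mathcal{O}_{Q_1}(1,1)$, so that the quadrics of $\mathbb{P}^3$ cut out on $Q_1$ precisely the complete linear system $|\mathcal{O}_{Q_1}(2,2)|$ (the restriction map $H^0(\mathcal{O}_{\mathbb{P}^3}(2))\to H^0(\mathcal{O}_{Q_1}(2,2))$ is surjective). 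This system is very ample, defining a closed embedding of $Q_1$, so Bertini's theorem applies over any algebraically closed field and a general $Q_2$ meets $Q_1$ transversally in a smooth curve $C$. As a characteristic-free alternative one checks the Jacobian criterion for an explicit pair, e.g.\ $Q_1:\ x_0x_3-x_1x_2=0$ and $Q_2:\ \sum_{i=0}^{3} a_ix_i^2=0$ for suitably generic $a_i$.

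For step (ii), regard $C$ as a smooth divisor in the class $\mathcal{O}_S(2,2)$ on the surface $S:=Q_1\cong\mathbb{P}^1\times\mathbb{P}^1$, with $\omega_S=\mathcal{O}_S(-2,-2)$. Adjunction on $S$ gives
\[
\omega_C\ \cong\ \big(\omega_S\otimes\mathcal{O}_S(C)\big)\big|_{C}\ \cong\ \big(\mathcal{O}_S(-2,-2)\otimes\mathcal{O}_S(2,2)\big)\big|_{C}\ \cong\ \mathcal{O}_C,
\]
whence $2g(C)-2=\deg\omega_C=0$ and $g(C)=1$. Equivalently, the complete-intersection formula $p_a=\tfrac{1}{2} d_1d_2(d_1+d_2-4)+1$ gives $p_a(C)=1$ for $d_1=d_2=2$ unconditionally, and smoothness upgrades $p_a$ to the geometric genus. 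Connectedness of $C$ --- hence, together with smoothness, integrality --- follows from the structure sequence $0\to\mathcal{O}_S(-2,-2)\to\mathcal{O}_S\to\mathcal{O}_C\to0$: since $H^0(\mathcal{O}_S)=\kappa$ and $H^1(\mathcal{O}_S(-2,-2))=0$ by K\"unneth, we obtain $H^0(\mathcal{O}_C)=\kappa$.

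The step I expect to be the actual crux is (i): pinning down the meaning of ``general'' and staying safe in positive characteristic, since Bertini's \emph{smoothness} assertion can fail for inseparable linear systems. Here the morphism attached to $|\mathcal{O}_{Q_1}(2,2)|$ is a genuine closed embedding, so the theorem does apply in all characteristics; and as a robust fallback the explicit transverse pair above, combined with the openness noted in the first step, already suffices. Everything afterwards --- the adjunction computation and the cohomological connectedness --- is characteristic-free. For later use in the paper I would also record that either ruling of $Q_1$ restricts to a degree-$2$ map $C\to\mathbb{P}^1$ branched at $4$ points (Riemann--Hurwitz, consistent with $g=1$), which is precisely what the $j$-invariant of $C$ measures.
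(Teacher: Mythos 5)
Your argument is correct and complete. Note that the paper offers no proof of this proposition at all: it is stated bare, with the section's opening citation of standard references standing in for an argument, and the supporting ideas are scattered implicitly through Section 2 and Appendices A--B (identify $Q_1$ with $\Sigma_{1,1}\cong\mathbb{P}^1\times\mathbb{P}^1$, pull $Q_2$ back to a $(2,2)$-curve, observe that singularity is a Zariski-closed condition, and read off the genus from the degree-$2$ projection branched at the four roots of $F_1^2-4F_0F_2$). Your write-up follows essentially that same route but actually supplies the missing proof: the openness-plus-one-example reduction for smoothness, Bertini applied to the very ample system $|\mathcal{O}_{Q_1}(2,2)|$ (correctly flagged as safe in positive characteristic because the system defines a closed embedding), adjunction with $\omega_S=\mathcal{O}_S(-2,-2)$ for the genus, and the K\"unneth vanishing $H^1(\mathcal{O}_S(-2,-2))=0$ for connectedness --- a point the paper never addresses but which is needed for $2g-2=\deg\omega_C$ to yield $g=1$. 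The only caveat worth recording is in your closing aside: the ``branched at $4$ points'' description of the projection $C\to\mathbb{P}^1$ (and the discriminant $F_1^2-4F_0F_2$ used later in the paper) requires characteristic $\neq 2$, where the cover may be wildly ramified; this does not affect the proposition itself, whose proof via adjunction is characteristic-free as you say.
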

	The proposition \ref{prop} suggests that the intersection of two quadric surfaces is expected to be isomorphic to an elliptic curve, whose isomorphism class is determined by the $ j $-invariant. We describe a way to compute the $ j $-invariant of an intersection of two quadric surfaces. 
	\begin{definition}
		The standard Segre embeddings are a family of morphisms of projective variety
		\begin{center}
			\begin{tikzpicture}
			\node (1) at (0,0) {$\mathbb{P}^{n}\times\mathbb{P}^{m}$};
			\node (2) at (4,0) {$\mathbb{P}^{N_{n,m}}$};
			\node (3) at (-2,-0.5) {$([X_{0}: \ldots : X_{n}],	[Y_0: \ldots :Y_{m}])$};
			\node (4) at (5,-0.5) {$[X_{0}Y_{0}:...:X_{n}Y_{m}]$};
			\path[->]
			(1) 	edge node[above]{$s_{n,m}$}	(2);
			\path[|->]
			(3) edge (4);
			\end{tikzpicture}
		\end{center}
		where $ N_{n,m}=(m+1)(n+1)-1 $ and the sequence $ [X_{i}Y_{j}] $ is ordered by the standard lexicographical order. The images of these embeddings are called standard Segre varieties and they are denoted by the symbol $ \Sigma_{n,m} $. They are essentially isomorphic copies of $ \mathbb{P}^{n}\times\mathbb{P}^{m} $ inside $ \mathbb{P}^{N_{n,m}} $.
		\label{ definition standard Segre}
	\end{definition}
	\begin{example}
		$ \Sigma_{1,1} \subset \mathbb{P}^{3}$ is the smooth quadric surface defined by the equation
		\[
		X_{0}X_{3}=X_{1}X_{2}
		\]
		\label{Example Segre11}
	\end{example}
	Example \ref{Example Segre11} gives an easy characterization of the intersection of two smooth quadric surfaces $ Q_{1},Q_{2} $. We recall a basic result in Algebraic Geometry.
	\begin{lemma}
		All the smooth quadric hypersurfaces of $ \mathbb{P}^{n} $ are projectively isomorphic.
		\label{lemma isomorphic quadrics}
	\end{lemma}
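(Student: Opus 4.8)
The plan is to translate the geometric statement into linear algebra. A quadric hypersurface $Q\subset\mathbb{P}^{n}$ is the zero locus $V(q)$ of a nonzero quadratic form $q\in\kappa[x_{0},\dots,x_{n}]_{2}$, and---assuming $\operatorname{char}\kappa\neq 2$---such a form is encoded by its Gram matrix, a nonzero symmetric matrix $A\in\operatorname{Mat}_{n+1}(\kappa)$ via $q(x)=x^{T}Ax$. A projective automorphism of $\mathbb{P}^{n}$ is induced by some $M\in GL_{n+1}(\kappa)$, and the substitution $x\mapsto Mx$ sends $q$ to the form with Gram matrix $M^{T}AM$. So the claim follows once one shows that the Gram matrices of all smooth quadrics lie in a single congruence orbit.

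First I would characterize smoothness. By the Jacobian criterion a point $[v]\in Q$ is singular iff $\nabla q(v)=2Av=0$; since $\operatorname{char}\kappa\neq 2$ this says $v\in\ker A$, and conversely any $0\neq v\in\ker A$ satisfies $q(v)=v^{T}Av=0$, hence lies on $Q$ and is a singular point of it. Therefore $Q$ is smooth if and only if $A$ is invertible, i.e. if and only if $q$ is a nondegenerate quadratic form.

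Next I would invoke the structure theory of symmetric bilinear forms. By successive completion of the square (Gram--Schmidt orthogonalization of the bilinear form associated with $q$) there is $M_{1}\in GL_{n+1}(\kappa)$ with $M_{1}^{T}AM_{1}=\operatorname{diag}(d_{0},\dots,d_{n})$, and nondegeneracy forces every $d_{i}\neq 0$. Because $\kappa$ is algebraically closed each $d_{i}$ has a square root, so conjugating by the diagonal matrix $M_{2}=\operatorname{diag}(d_{0}^{-1/2},\dots,d_{n}^{-1/2})$ gives $(M_{1}M_{2})^{T}A(M_{1}M_{2})=I_{n+1}$. Hence every smooth quadric of $\mathbb{P}^{n}$ is projectively isomorphic to the standard quadric $V(x_{0}^{2}+\dots+x_{n}^{2})$, and transitivity of the relation ``projectively isomorphic'' finishes the proof.

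The only real subtlety is the blanket hypothesis $\operatorname{char}\kappa\neq 2$, which is what lets one pass freely between quadratic forms and symmetric matrices and run the diagonalization; in characteristic $2$ the statement still holds but requires the genuine theory of quadratic forms (the alternating case, the Arf invariant) rather than the argument above. Since the rest of the paper works with elliptic curves and their $j$-invariants---which forces $\operatorname{char}\kappa\neq 2,3$ anyway---this restriction costs nothing here.
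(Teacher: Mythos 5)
Your proof is correct, and it is the standard argument: identify a quadric with its Gram matrix (char $\neq 2$), observe via the Jacobian criterion that smoothness is equivalent to nondegeneracy of the form, and then use diagonalization plus the existence of square roots in an algebraically closed field to reduce every nondegenerate form to $x_0^2+\dots+x_n^2$ under congruence $A\mapsto M^{T}AM$. The paper itself offers no proof --- it states the lemma as a recalled standard fact --- so there is nothing to compare against; your write-up supplies exactly the argument the authors are implicitly relying on, and your remark that the characteristic-$2$ case needs the genuine theory of quadratic forms (while being harmless here since the $j$-invariant machinery already excludes characteristic $2$ and $3$) is an appropriate and accurate caveat.
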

	Suppose we have two smooth quadric surfaces $ Q_{1} $ and $ Q_{2}$. From lemma \ref{lemma isomorphic quadrics} and example \ref{Example Segre11} we can choose a projectivity $ {f:\mathbb{P} ^{3}\to \mathbb{P}^{3}}$ such that $ f(Q_{1})=\Sigma_{1,1} $. Assume that $ Q_{1}=\Sigma_{1,1} $, then $ s_{1,1}^{-1}(Q_{2})\cong Q_{1}\cap Q_{2} $. Let $ {F_{2}(Z_{0},Z_{1},Z_{2},Z_{3}) }$ be the quadratic form defining $ Q_{2} $, then $ {s_{1,1}^{-1}(Q_{2})} $ is defined in $ \mathbb{P}^{1}\times\mathbb{P}^{1} $ by a bi-homogeneous polynomial of bi-degree (2,2)
	\[
	G(X_{0},X_{1};Y_{0},Y_{1}):=F_{2}(X_{0}Y_{0},X_{0}Y_{1},X_{1}Y_{0},X_{1}Y_{1})
	\]
	which is called the "pullback" of the polynomial $ F_{2} $ through $ s_{1,1} $. Hence, in particular every intersection of two smooth quadric surfaces is isomorphic to the zero locus of a polynomial of bi-degree (2,2) in $ \mathbb{P}^{1}\times\mathbb{P}^{1} $. The next proposition explains how to compute the $ j $-invariant of a curve defined in that way.
	\begin{proposition}
		Let $ C\subset \mathbb{P}^{1}\times\mathbb{P}^{1} $ be a smooth curve defined by a bi-homogeneous polynomial of bi-degree (2,2) over a field of characteristic different from $ 0 $ and $ 3 $.
		\[
		F(X_{0},X_{1};Y_{0},Y_{1})=Y_{0}^{2}F_{0}(X_{0},X_{1})+Y_{0}Y_{1}F_{1}(X_{0},X_{1})+Y_{1}^2F_{2}(X_{0},X_{1})
		.\]
		Define $ G(X_{0},X_{1}):=F_{1}^{2}-4F_{0}F_{2} $ and write
		\[
		G(X_{0},X_{1})=q_{0}X_{0}^{4}+ q_{1}X_{0}^{3}X_{1}+q_{2}X_{0}^{2}X_{1}^{1}+q_{3}X_{0}X_{1}^{3}+q_{4}X_{1}^{4}
		\]
		Define
		\begin{eqnarray*}
			S &:=& q_{0}q_{4}-\dfrac{q_{1}q_{3}}{4} + \dfrac{q^{2}}{12}\\
			T &:=& \dfrac{q_{0}q2q_{4}}{6} +\dfrac{q_{1}q_{2}q_{3}}{48} - \dfrac{q_{2}^{3}}{216}
			- \dfrac{q_{0}q_{3}^{2}}{16}-\dfrac{q_{1}^{2}q_{4}}{16}
		\end{eqnarray*}
		Then $ j(C)=\dfrac{S^3}{S^3-27T^2} $.
	\end{proposition}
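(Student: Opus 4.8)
The plan is to realise $C$ as a double cover of $\mathbb{P}^1$ branched over the zero locus of $G$, and then to invoke the classical $\mathrm{SL}_2$-invariant theory of binary quartics. Throughout I use that $\operatorname{char}\kappa\neq 2,3$ (which is what makes the displayed expressions and the separable double cover below meaningful).

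First I would analyse the projection $\pi\colon C\to\mathbb{P}^1$ onto the first factor, $([X_0:X_1],[Y_0:Y_1])\mapsto[X_0:X_1]$. Restricting $F$ to a fibre $\{[X_0:X_1]\}\times\mathbb{P}^1$ gives the binary quadratic $Y_0^2F_0+Y_0Y_1F_1+Y_1^2F_2$ evaluated at $[X_0:X_1]$; this is identically zero only if $F_0,F_1,F_2$ have a common zero $[a:b]$, in which case $\{[a:b]\}\times\mathbb{P}^1\subseteq C$ and $C$ would contain a rational curve, contradicting that $C$ is a smooth curve of genus $1$ (the genus being $1$ by the adjunction formula on $\mathbb{P}^1\times\mathbb{P}^1$, or by Proposition~\ref{prop1}). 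Likewise $F_2\not\equiv 0$, for otherwise $F=Y_0(Y_0F_0+Y_1F_1)$ is reducible. Hence $\pi$ is a finite morphism of degree $2$. Passing to function fields with $x=X_0/X_1$ and $y=Y_1/Y_0$, the relation $F_2(x)y^2+F_1(x)y+F_0(x)=0$ shows that $w:=2F_2(x)y+F_1(x)$ satisfies $w^2=F_1^2-4F_0F_2=G(x)$, so $\kappa(C)=\kappa(x)\bigl(\sqrt{G(x)}\bigr)$ and $C$ is the smooth projective model of $w^2=G(x)$; in particular $\pi$ is branched exactly over the zeros of $G$ in $\mathbb{P}^1$ (with $[1:0]$ counted as a branch point when $\deg_x G<4$).

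Next I would show $G$ has four distinct roots. Since $\operatorname{char}\kappa\neq 2$ the cover $\pi$ is tamely ramified, so Riemann–Hurwitz gives $0=2g(C)-2=2\bigl(2g(\mathbb{P}^1)-2\bigr)+\deg R=-4+\deg R$, whence $R$ has degree $4$; as a degree-$2$ cover has at most one ramification point over each branch point, $\pi$ has exactly four simple branch points and $G$ is squarefree. (Alternatively one checks directly that a repeated root of $G$ at $x=a$ forces a point of $C$ above it at which all partials of $F$ vanish, using $4F_0(a)F_2(a)=F_1(a)^2$ and $G'(a)=0$, contradicting smoothness.) Thus $C$ is the genus-$1$ double cover of $\mathbb{P}^1$ branched over the four distinct points $V(G)$, and $j(C)$ depends only on the $\mathrm{PGL}_2(\kappa)$-orbit of that divisor — equivalently only on the $\mathrm{GL}_2(\kappa)$-orbit of $G$, since a coordinate change on $\mathbb{P}^1$ transforms $G$ by the corresponding substitution and rescaling $G$ by a constant leaves $w^2=G$ isomorphic.

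Finally I would identify the stated rational function. The quantities $S$ and $T$ are, up to the constants $\tfrac1{12}$ and $\tfrac1{432}$, the two fundamental $\mathrm{SL}_2$-invariants $I$ (degree $2$) and $J$ (degree $3$) of a binary quartic, so $S^3$ and $27T^2$ have the same weight, $\tfrac{S^3}{S^3-27T^2}$ is a genuine $\mathrm{GL}_2$-invariant, and it is well defined on smooth $C$ because $S^3-27T^2$ is a nonzero constant times the discriminant of $G$, which vanishes precisely when $C$ is singular. Both $j(C)$ and $\tfrac{S^3}{S^3-27T^2}$ are therefore regular functions of the single modulus, so it suffices to check their equality on the Legendre pencil $G_\lambda=X_0X_1(X_0-X_1)(X_0-\lambda X_1)$, where $w^2=x(x-1)(x-\lambda)$ has a classically known $j$-invariant. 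Reading off $q_0=q_4=0$, $q_1=1$, $q_2=-(1+\lambda)$, $q_3=\lambda$ yields $S=\tfrac{\lambda^2-\lambda+1}{12}$ and $T=\tfrac{(1+\lambda)(2\lambda-1)(\lambda-2)}{432}$, and the claimed identity reduces to
\[
27\,\lambda^2(\lambda-1)^2=4(\lambda^2-\lambda+1)^3-(1+\lambda)^2(2\lambda-1)^2(\lambda-2)^2,
\]
which is the classical relation $4I^3-J^2=27\operatorname{disc}(G)$ specialised to the pencil, verified by direct expansion. (Equivalently one may cite the classical description of the Jacobian of $w^2=q_0x^4+\dots+q_4$ as the Weierstrass curve $Y^2=X^3-\tfrac I3X-\tfrac J{27}$ and compute its $j$-invariant.) I expect the only real subtlety to be the boundary bookkeeping — deducing from smoothness of $C$ that $F_0,F_1,F_2$ have no common zero and that $G$ is squarefree, and handling the case where $[1:0]$ is one of the branch points — together with tracking the normalisation constants relating $S,T,j$; the remainder is the standard invariant theory of binary quartics.
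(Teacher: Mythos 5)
Your proof takes the same route as the paper's own (Appendix B): realize $C$ as the degree-two cover of $\mathbb{P}^{1}$ branched at the four roots of $G=F_{1}^{2}-4F_{0}F_{2}$ and then appeal to the classical invariant theory of binary quartics; you simply fill in the details (that the first projection has degree $2$, that $G$ is squarefree via Riemann--Hurwitz, and the final identity via reduction to the Legendre pencil) that the paper delegates to citations of Salmon, Abdesselam and Dolgachev. One caveat worth flagging: your own Legendre computation shows that $S^{3}/(S^{3}-27T^{2})=\tfrac{4(\lambda^{2}-\lambda+1)^{3}}{27\lambda^{2}(\lambda-1)^{2}}=j_{\mathrm{std}}/1728$, so the displayed formula computes the absolute invariant normalized to equal $1$ when $T=0$ rather than the standard $j$-invariant --- a normalization ambiguity already present in the paper's statement (which also contains the typos $q^{2}/12$ for $q_{2}^{2}/12$ and ``characteristic different from $0$ and $3$'' where ``$2$ and $3$'' is meant), and harmless for the protocol since both parties use the same convention.
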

	\begin{proof}
		See appendix \ref{j invariant (2,2)-curve}.
	\end{proof}
	\section{Segre and Veronese embeddings}
	We recall here the general notion of Segre and Veronese embeddings. We already defined the standard Segre embeddings in  section 2. A general Segre embedding is a composition of the standard Segre embedding and a projective automorphism of the ambient space of the codomain, which is represented by a square matrix.
	\begin{definition}
		Let $ n,m\in \mathbb{N} $, $ N_{n,m}:=(m+1)(n+1)-1 $  and $ M\in \operatorname{GL}(N_{n,m}+1) $. Then we define
		\[
		s_{n,m}^{M}:=M\circ s_{n,m},\;\;\Sigma_{n,m}^{M}:=M\Sigma_{n,m}
		\]
		to be respectively the Segre embedding and the Segre Variety represented by the matrix $ M $. 
	\end{definition}
	\begin{remark}
		Since all the smooth quadric surfaces of $ \mathbb{P}^{3} $ are projectively isomorphic, then each of them is equal to some $ \Sigma_{n,m}^{M} $.
	\end{remark}
	\begin{example}
		Let
		\[
		M:=\begin{bmatrix}
		1 & -2 & 3 & 0\\
		0 & -1 & 1 & -5\\
		8 & 3 & 1 &-1\\
		1 & 0 & 0 & 1\\
		\end{bmatrix}
		\]
		be a matrix, then it represents the non-standard Segre embedding
		\begin{center}
			\begin{tikzpicture}
			\node (1) at (0,0) {$\mathbb{P}^{1}\times\mathbb{P}^{1}$};
			\node (2) at (6,0) {$\mathbb{P}^{3}$};
			\node (3) at (0,-1.5) {$\left(
				\begin{bmatrix}
				X_{0} \\ X_{1}
				\end{bmatrix},
				\begin{bmatrix}
				Y_{0}\\ Y_{1}
				\end{bmatrix}
				\right)$};
			\node (4) at (6,-1.5) {$\begin{bmatrix}
				X_{0}Y_{0}-2X_{0}Y_{1}+3X_{1}Y_{0}\\
				-X_{0}Y_{1}+X_{1}Y_{0}-5X_{1}Y_{1}\\
				8X_{0}Y_{0}+3X_{0}Y_{1}+X_{1}Y_{0}-X_{1}Y_{1}\\
				X_{0}Y_{0}+X_{1}Y_{1}
				\end{bmatrix}$};
			\path[->]
			(1) 	
			edge node[above]{$S_{1,1}^{M}$}	(2);
			\path[|->]
			(3) edge (4);
			\end{tikzpicture}
		\end{center}
	\end{example}
	We now define Veronese embeddings, which are copies of $ \mathbb{P}^{n} $ in a larger ambient space.
	\begin{definition}
		Let $ n,m\in\mathbb{N} $, then the standard Veronese embedding is the morphism
		\begin{center}
			\begin{tikzpicture}
			\node (1) at (0,0) {$\mathbb{P}^{n}$};
			\node (2) at (3,0) {$\mathbb{P}^{{n+m\choose m}-1}$};
			\node (3) at (-1,-0.5) {$[X_0:\dots:X_{n}]$};
			\node (4) at (4,-0.5) {$[X_{0}^{m}:\dots :X_{n}^{m}]$};
			\path[->]
			(1) 	edge node[above]{$v_{n,m}$}	(2);
			\path[|->]
			(3) edge (4);
			\end{tikzpicture}
		\end{center}
		where the sequence $[X_{0}^{m}:\dots :X_{n}^{m}]$ is ordered by the lexicographical order. The images of these embeddings are called standard Veronese varieties and they are denoted by $ V_{n,m} $.
	\end{definition}
	\begin{definition}
		Let $ n,m\in \mathbb{N} $, $ N_{n,m}:={n+m\choose m}-1 $ and $ M\in \operatorname{GL}(N_{n,m}+1) $. Then we define
		\[
		v_{n,m}^{M}:=M\circ v_{n,m},\;\;V_{n,m}^{M}:=MV_{n,m}
		\]
		to be respectively the Veronese embedding and the Veronese variety represented by the matrix $ M $. 
	\end{definition}
	For our purposes we are interested in Segre embeddings $ s_{1,1}^{M} $, the Veronese embeddings $ v_{3,m}^{M'}$. We give a name to their composition.
	\begin{definition}
		We call $ \sigma $-embedding any composition $ v_{3,m}^{M'}\circ s_{1,1}^{M} $.
	\end{definition}
	Any $ \sigma $-embedding is represented by a $ (N_{3,m}+1)\times(m+1)^2 $ matrix $ M $. It is defined by the condition
	\[
	\sigma([X_{0},X_{1}],[Y_{0},Y_{1}])=M\cdot \begin{bmatrix}
	X_{0}^{m}Y_{0}^{m}\\
	:\\
	X_{1}^{m}Y_{1}^{m}
	\end{bmatrix}
	\]
	
	We now describe how to construct automorphisms of the Veronese varieties. First of all we describe a natural multiplicative group homomorphism 
	\[
	\operatorname{GLEmb}(n,m):\operatorname{GL}(n+1)\to \operatorname{GL}(N_{n,m})
	\]
	arising from the standard Veronese embedding $ v_{n,m} $. Let $A:=(a_{ij})_{i,j\in\{0,...,n\}}\in\operatorname{GL}(n+1)$. It corresponds to an action on the coordinates
	\[
	X_{i}\mapsto L_{i}:=\sum_{j=0}^{n}a_{ij}X_{j},\;\;i\in\{0,..,n\}
	\]
	There is a natural action induced on the monomials of any fixed degree, in fact
	\[
	X_{0}^{e_{0}}\cdots X_{n}^{e_{n}}\mapsto L_{0}^{e_{0}}\cdots L_{n}^{e_{n}}
	\]
	We denote by $ \operatorname{GL}(n,m)(A) $ the matrix representing the action of $ A $ on the homogeneous polynomials of degree $ m $ with respect to the monomial basis with the standard lexicographical order.
	\begin{definition}
		We call general linear group embedding associated to the standard Veronese embedding $ v_{n,m} $ the function $ \operatorname{GLEmb}(n,m)$, which is defined above.
	\end{definition}
	\begin{example}
		In the case $ n=1,m=2,N_{n,m}=2 $ a general matrix 
		$ \begin{pmatrix}
		a & b\\
		c & d
		\end{pmatrix} \in \mathcal{GL}(2)$
		acts on the coordinates
		\begin{eqnarray*}
			X_{0} &\mapsto& aX_{0}+bX_{1}\\
			X_{1} &\mapsto& cX_{0}+dX_{1}
		\end{eqnarray*}
		Then the action on the monomials of degree 2 is the following
		\begin{eqnarray*}
			X_{0}^{2}&\mapsto&a^{2}X_{0}^{2}+2abX_{0}X_{1}+b^{2}X_{1}^{2}\\
			X_{0}X_{1}&\mapsto&acX_{0}^{2}+(ad+bc)X_{0}X_{1}+bdX_{1}^{2}\\
			X_{1}^{2}&\mapsto&c^{2}X_{0}^{2}+2cdX_{0}X_{1}+d^{2}X_{1}^{2}
		\end{eqnarray*}
		So $ \operatorname{GLEmb}(1,2)\begin{pmatrix}
		a & b \\
		c & d
		\end{pmatrix}=\begin{pmatrix}
		a^{2} & 2ab & b^2\\
		ac & ad+bc & bd\\
		c^2 & 2cd & d^2
		\end{pmatrix} $
	\end{example}
	The subgroup $ \operatorname{Im}(\operatorname{GLEmb}(n,m))\subset \operatorname{GL}(N_{n,m}+1) $ corresponds to the set of automorphisms of $ \mathbb{P}^{N_{n,m}} $ which fix $ V_{n,m} $.
	We can construct matrices representing automorphisms of any Veronese variety $ V_{n,m}^{M} $.
	\begin{proposition}
		$ \operatorname{Aut}(V_{n,m}^{M}) :=M\operatorname{Im}(\operatorname{GL}(n,m))M^{-1}$
	\end{proposition}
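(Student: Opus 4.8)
The plan is to deduce the statement from the fact recalled just above, namely that $\operatorname{Im}(\operatorname{GLEmb}(n,m))$ is precisely the group of linear automorphisms of $\mathbb{P}^{N_{n,m}}$ preserving the standard Veronese variety $V_{n,m}$; given that, everything else is a one-line conjugation argument. Throughout I read $\operatorname{Aut}(V_{n,m}^M)$ as the subgroup of $\operatorname{GL}(N_{n,m}+1)$ (equivalently of $\operatorname{PGL}(N_{n,m}+1)$) consisting of those $g$ with $g(V_{n,m}^M)=V_{n,m}^M$, which is the group the protocol actually manipulates.

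First I would record the elementary identity: for any $g\in\operatorname{GL}(N_{n,m}+1)$,
\[
g(V_{n,m}^M)=V_{n,m}^M \iff g(MV_{n,m})=MV_{n,m}\iff (M^{-1}gM)(V_{n,m})=V_{n,m}.
\]
So $g$ preserves $V_{n,m}^M$ if and only if $M^{-1}gM$ preserves $V_{n,m}$, which by the recalled fact is equivalent to $M^{-1}gM\in\operatorname{Im}(\operatorname{GLEmb}(n,m))$, i.e. to $g\in M\operatorname{Im}(\operatorname{GLEmb}(n,m))M^{-1}$. This gives the asserted set equality, and since $M\operatorname{Im}(\operatorname{GLEmb}(n,m))M^{-1}$ is a conjugate of a subgroup it is automatically a subgroup; hence the right-hand side really is a group of automorphisms of $V_{n,m}^M$, and the equality is one of groups.

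The genuine content therefore lies entirely in the recalled fact, and that is where I expect the real argument (or the appeal to standard theory) to be concentrated. For completeness I would justify it in two halves. (i) $V_{n,m}\cong\mathbb{P}^n$ and $\operatorname{Aut}(\mathbb{P}^n)=\operatorname{PGL}(n+1)$; since $v_{n,m}$ is the embedding by the complete linear system $|\mathcal{O}_{\mathbb{P}^n}(m)|$, whose sections have the degree-$m$ monomials as a basis, the natural action of $A\in\operatorname{GL}(n+1)$ on those monomials — which is exactly $\operatorname{GLEmb}(n,m)(A)$ — realizes every abstract automorphism of $V_{n,m}$ by a linear map of the ambient $\mathbb{P}^{N_{n,m}}$, so $\operatorname{Im}(\operatorname{GLEmb}(n,m))$ surjects onto $\operatorname{Aut}(V_{n,m})$. (ii) Conversely, because $V_{n,m}$ spans $\mathbb{P}^{N_{n,m}}$ and is linearly normal, a linear automorphism $g$ of the ambient space preserving $V_{n,m}$ is determined by its restriction to $V_{n,m}$, and that restriction lies in $\operatorname{Aut}(V_{n,m})$; hence no such $g$ can lie outside $\operatorname{Im}(\operatorname{GLEmb}(n,m))$ up to scalars. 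A last small point worth checking is that $\ker\operatorname{GLEmb}(n,m)=\kappa^\times\subset\operatorname{GL}(n+1)$, so on the level of $\operatorname{PGL}$ the description is an honest identification rather than merely a statement about images; the main obstacle, such as it is, is making (ii) rigorous, i.e. ruling out ambient-space symmetries not coming from $\operatorname{Aut}(\mathbb{P}^n)$, which is exactly what linear normality of the Veronese provides.
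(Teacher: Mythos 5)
Your proof is correct and takes essentially the same route as the paper: the paper's entire proof is the conjugation identity $\operatorname{Aut}(MX)=M\operatorname{Aut}(X)M^{-1}$, applied together with the fact (stated just before the proposition) that $\operatorname{Im}(\operatorname{GLEmb}(n,m))$ is exactly the stabilizer of $V_{n,m}$ in $\operatorname{GL}(N_{n,m}+1)$. The only difference is that you also sketch a justification of that recalled fact via completeness of the linear system $|\mathcal{O}_{\mathbb{P}^n}(m)|$ and linear normality, which the paper simply takes for granted.
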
	
	\begin{proof}
		It is a general fact
		\[
		\operatorname{Aut}(MX)=M\operatorname{Aut}(X)M^{-1}
		\]
		for any $ X\subset \mathbb{P}^{N}$ projective subvariety, for any $ M\in \operatorname{GL}(N+1) $.
	\end{proof}
	\section{Quadratic Surface Intersection (QSI) Key Exchange}
	In the proposed key exchange protocol both Alice and Bob choose  random quadric surfaces. The common key is the isomorphism class of the curve intersection of those quadrics, namely its $ j $-invariant. To make the exchange secure the quadric surface is embedded in a large projective space through a non-standard Veronese embedding.
	Any user $ U $ has a private data given by a non-standard Veronese embedding of $ \mathbb{P}^{3} $ represented by a $ N_{n,m}\times N_{n,m} $ matrix. The user has also a private data given by the isomorphic copy of quadric surface inside the chosen Veronese variety.
	$ U $ also needs to publish some data in order to allow anyone who wants to contact him to produce a distinct and random quadric surface: for this purpose he publishes some automorphisms of the Veronese variety and the user  chooses another quadric surface (distinct from the private one) inside it. These information should not allow any eavesdropper to  recover the Veronese embedding chosen by $ U $.
	\subsection{QSI algorithm first version.}
	The algorithm  is comprised of key generation and the key exchange. \\
	\textbf{User key construction:}
	\begin{enumerate}
		\item $ U $ chooses a finite field $ \mathbb{F}_{q} $.
		\item $ U $ chooses $ m\in \mathbb{N}^{+} $ and computes $ N={m+3 \choose 3} -1$.
		\item $ U $ chooses a non-standard Veronese embedding 
		$$ v_{3,m}^{M_{U}} :\mathbb{P}^{3}\to M_{U}\cdot V_{3,m}\subset\mathbb{P}^{N}$$
		represented by the matrix $ M_{U}\in \operatorname{GL}(N+1) $.
		\item  $ U $ constructs some automorphisms of $ M_{U}\cdot V_{3,m} $ by using the homomorphism $ \operatorname{GLEmb}(3,m) $. $ U $ chooses a set of automorphisms of $ \mathbb{P}^{3} $ of order $ q^{4}-1 $ (with a characteristic polynomial irreducible over $ \mathbb{F}_{q} $) ${\{U_{i}'\}_{1\leq i \leq t}\subset\operatorname{GL}(4)}$ and then  he computes
		\[
		U_i:=M_{U}\operatorname{GLEmb(3,m)(U_{i}')}M_{U}^{-1}
		\]
		we assume that $ t=2 $ is the most appropriate one.
		\item  $ U $ constructs a secret quadric surface inside $ M_{U}\cdot V_{3,m} $, more precisely a $ \sigma $-embedding
		\[
		\sigma^{(s)}_{U}:\mathbb{P}^{1}\times\mathbb{P}^{1}\to M_{U}\cdot V_{3,m}\subset \mathbb{P}^{N} 
		\]
		represented by a $ (N+1)\times (m+1)^{2}$ matrix $ M_{U}^{(s)} $. $U$ constructs also a hyperplane $ H_{U}\subset\mathbb{P}^{N} $ containing $ \operatorname{Im}(\sigma^{(s)}_{U}) $, which is represented by a vector in $\operatorname{coker}(M_{U}^{(s)})\subset \mathbb{F}_{q}^{N+1}$.
		\item  $ U $ constructs a public quadric surface inside $ M_{U}\cdot V_{3,m} $, more precisely a $ \sigma $-embedding
		\[
		\sigma^{(p)}_{U}:\mathbb{P}^{1}\times\mathbb{P}^{1}\to M_{U}\cdot V_{3,m}\subset \mathbb{P}^{N} 
		\]
		represented by a $ N+1\times (m+1)^{2}$ matrix $ M_{U}^{(p)} $.
	\end{enumerate}
	The key exchange is asymmetric since the common keys are different depending on if Alice wants to contact Bob or vice versa. Suppose that Bob wants to contact Alice.\\
	\textbf{Alice public keys:}
	\begin{itemize}
		\item The field $ \mathbb{F}_{q} $.
		\item $ m\in\mathbb{N}^{+} $.
		\item Two matrices $ A_{1},A_{2}\in \operatorname{GL}(N+1)$, where $ N={m+3\choose 3}-1 $.
		\item The $ (N+1)\times (m+1)^{2} $ matrix $ M_{A}^{(p)} $.
		\item The hyperplane $ H_{A}\in \mathbb{F}_{q}^{N+1} $.
	\end{itemize}
	\textbf{Alice secret keys:}
	\begin{itemize}
		\item The  matrix $ M_{A}^{(s)} $.
	\end{itemize}
	\textbf{Key Exchange:}
	\begin{enumerate}
		\item Bob chooses $ m_{1},m_{2},m_{1}',m_{2}'\in\{0,...,q^{4}-1\} $ and then computes\\ ${M_{B}':=A_{1}^{m_{1}}A_{2}^{m_{2}}A_{1}^{m_{1}'}A_{2}^{m_{2}'}}$.
		\item Bob computes the matrix $ M_{B}:=M_{B}'\cdot M_{A}^{(p)} $. This corresponds to a choice of a $ \sigma $-embedding $ {\sigma_{B}:\mathbb{P}^{1}\times\mathbb{P}^{1}\to\mathbb{P}^{N}} $.
		\item Bob computes a random $ H_{B}\in \operatorname{coker}(M_{B}) $ and sends it to Alice. This corresponds to a hyperplane containing $ \operatorname{Im}(\sigma_{B}) $.
		\item Bob computes the pullback $ \sigma_{B}^{*}H_{A} $. It is a curve in $ \mathbb{P}^{1} \times \mathbb{P}^{1} $ defined by a curve of bi-degree $ (m,m) $. He uses a factorization algorithm to find a component of bi-degree (2,2) then he computes its $ j $-invariant $ j_{B}\in \mathbb{F}_{q} $. The probability that the residue curve of bi-degree $ (m-2,m-2) $ is reducible is negligible (see appendix \ref*{Irreducibility (m,m) divisor} for more details), so the $ j_{B} $ is well determined except for $ m=4 $.
		\item Alice computes the pullback $ {\sigma_{A}^{(s)}}^{*}H_{B} $. She finds the component of bi-degree (2,2), then she computes its $ j $-invariant $ j_{A}\in\mathbb{F}_{q} $.
	\end{enumerate}
	$ j_{A}=j_{B} $ is the common key of Alice and Bob.
	\begin{example}(Toy Example)
		\textbf{Key Generation:}	
		A finite field $ \mathbb{F}_q$ with $ q = 67 $, $ m=3$. Alice chooses a random matrix 	
		\[ M_A = \left[ 
		\begin{smallmatrix} 
		56&21&22&46&19&30&54&59&17&23&35&17&18&60&13&54&27&43&55&16\\42&1&54&49&3&29&9&1&34&65&35&65&34&47&27&4&25&53&27&17\\29&3&59&56&50&44&36&27&63&33&3&15&4&36&28&32&3&50&29&56\\62&53&21&31&23&50&28&37&13&62&16&27&29&27&66&44&40&42&60&55\\9&2&58&4&50&2&63&34&10&1&27&34&14&17&11&61&4&48&36&61\\35&63&13&66&50&14&23&42&23&58&22&14&6&29&52&58&36&9&42&0\\61&64&15&65&57&44&54&60&11&4&25&37&29&5&56&9&9&11&57&31\\40&42&2&48&28&26&31&9&8&15&62&31&53&46&12&39&46&52&30&8\\29&28&40&29&36&62&32&57&47&25&11&10&55&8&39&43&3&66&64&29\\61&63&24&42&43&15&18&63&21&64&60&14&26&8&12&0&23&61&28&66\\55&21&48&0&47&17&20&22&61&65&49&5&24&43&51&24&62&14&41&42\\34&25&43&37&36&11&16&11&65&59&61&54&22&66&66&49&28&20&26&3\\0&38&53&16&44&46&21&64&54&5&39&2&64&28&61&30&53&1&34&58\\18&65&52&54&6&31&43&3&46&7&5&26&29&55&39&65&12&4&33&63\\49&59&43&54&46&14&16&4&30&40&29&1&48&59&22&2&8&14&30&33\\34&46&63&8&14&51&1&29&6&52&46&47&25&9&13&28&8&33&25&34\\48&31&6&62&34&49&16&60&32&21&55&22&2&23&35&20&62&0&64&15\\33&45&48&62&5&0&1&65&66&35&43&34&5&18&11&57&41&6&53&41\\12&24&28&36&4&18&31&34&1&21&65&13&1&31&43&9&23&43&66&13\\41&9&45&49&6&38&40&4&50&45&10&14&13&18&40&23&6&33&13&39\\
		\end{smallmatrix}
		\right]
		\]
		
		which represents a choice of a Veronese embedding 
		\[
		v_{A}:\mathbb{P}^{3}\to V_A \subset \mathbb{P}^{19}
		.\] She produces a random automorphism of the Veronese variety $ V_A $ which is given by a matrix 	
		\[ A_1 = \left[
		\begin{smallmatrix}
		60&55&21&17&53&5&30&53&30&25&36&1&40&46&14&36&27&7&54&54\\0&59&51&65&25&62&57&4&23&55&8&53&8&34&36&24&36&33&55&60\\56&34&7&35&10&39&18&64&62&49&23&10&41&28&0&12&1&52&51&51\\61&42&35&43&29&44&30&35&28&61&6&48&30&54&21&37&8&21&48&0\\37&20&42&4&23&25&26&47&3&46&31&49&13&12&40&21&10&66&12&38\\27&39&26&35&4&43&35&48&5&57&56&28&55&65&15&23&27&43&41&53\\54&9&7&15&21&21&8&57&40&22&50&55&29&46&39&42&44&21&39&32\\3&17&9&43&3&48&38&49&24&5&20&60&56&57&31&18&53&57&21&43\\63&20&4&44&2&30&16&26&21&23&40&42&6&28&26&23&40&62&54&1\\23&6&0&50&47&46&29&17&62&62&32&55&26&22&27&65&29&25&65&11\\66&16&65&60&8&34&21&43&49&64&3&20&37&49&1&9&58&39&20&5\\38&19&30&22&11&15&20&9&16&1&65&12&17&1&56&41&4&21&44&19\\15&16&16&4&26&61&62&16&6&36&33&33&30&2&35&56&65&59&33&46\\44&22&9&56&57&11&10&6&14&22&24&58&49&32&35&58&4&14&53&48\\65&26&24&24&48&57&14&44&0&39&18&45&35&19&21&59&56&63&1&3\\39&24&5&34&46&14&31&14&59&40&1&38&43&46&40&21&33&65&20&36\\54&55&41&7&17&58&13&30&1&66&53&41&15&47&66&65&58&34&0&41\\55&37&41&32&11&42&38&25&43&9&33&7&31&25&59&3&45&61&36&36\\39&26&46&66&50&5&52&14&2&3&15&25&22&27&65&4&56&58&27&60\\46&35&61&39&20&51&21&50&55&29&18&38&12&46&59&51&2&43&15&31
		\end{smallmatrix} \right]
		\] 	
		and she keeps the secret embedding 
		\[
		\sigma^{(s)}_{A}:\mathbb{P}^{1} \times \mathbb{P}^{1}\to \mathbb{P}^{19}
		\]
		represented by the matrix	
		\[ M_A^{(s)} = \left[
		\begin{smallmatrix}	
		9&17&23&59&50&10&11&36&64&56&27&16&40&62&8&6\\34&48&49&26&25&58&16&33&36&2&43&4&62&39&17&34\\3&49&33&24&26&64&66&46&63&17&61&49&14&43&65&23\\37&45&12&60&27&5&29&3&33&51&7&30&11&47&40&44\\54&57&19&25&13&14&37&23&24&39&21&41&58&41&65&64\\40&18&2&44&64&32&17&15&42&15&27&30&45&22&39&49\\46&10&38&59&31&36&1&28&18&51&46&19&5&8&31&26\\45&65&39&32&35&6&18&65&12&0&17&59&65&4&26&5\\4&14&13&27&43&58&63&66&41&57&39&12&30&43&25&7\\4&1&6&25&49&11&40&48&20&30&52&29&35&23&35&3\\31&5&63&25&63&2&20&62&32&13&43&24&18&14&40&14\\21&38&7&31&46&50&3&27&8&59&47&21&29&53&22&1\\9&55&21&31&48&5&20&66&9&33&28&0&45&25&7&48\\19&48&3&13&6&20&1&33&37&12&61&63&36&34&55&35\\21&9&62&15&20&7&24&62&64&9&30&31&1&46&62&60\\37&52&30&58&33&46&63&28&38&22&14&36&12&30&10&59\\33&36&36&55&66&15&15&56&17&56&62&21&7&39&20&29\\30&40&53&59&8&9&62&28&13&31&4&41&44&24&47&51\\50&41&5&5&42&40&62&20&36&59&2&41&23&62&25&42\\24&66&52&46&24&23&64&8&45&52&59&29&10&4&33&12 \\
		\end{smallmatrix} \right]. 
		\] For a public key, she computes a hyperplane 
		$ H_A $, which is a closed subscheme of projective space  $ \mathbb{P}^{19} $ over $ \mathbb{F}_q $ defined by:
		\begin{eqnarray*}
			&x_0 - 21x_3 - 15x_5 - 32x_6 + 16x_7 - 10x_8 + 5x_9 + 11x_{10} + 16x_{11} + x_{12} - 4x_{13} \\
			&- 28x_{14} - 20x_{15} + 18x_{16} + 8x_{17} + x_{18} + 32x_{19}
		\end{eqnarray*} 
		containing the image of $ \sigma^{(s)}_{A}$ and also  computes an embedding 
		\[
		\sigma^{(p)}_{A}:\mathbb{P}^{1} \times \mathbb{P}^{1}\to \mathbb{P}^{19} \] 
		represented by the matrix 	
		\[ M_A^{(p)} = \left[
		\begin{smallmatrix}	
		28&32&4&38&26&36&6&20&1&2&22&27&23&35&10&24\\29&3&48&59&19&11&17&23&10&61&12&50&21&17&46&35\\27&21&45&43&33&48&4&43&64&37&34&46&60&3&41&27\\4&45&65&55&6&40&11&30&41&31&19&23&42&41&11&1\\53&14&44&37&15&49&57&26&55&22&29&45&44&9&59&30\\45&60&22&66&26&27&60&60&54&63&62&64&4&18&44&18\\42&33&47&53&52&65&45&28&5&21&58&45&49&31&22&27\\30&13&6&6&37&40&38&51&2&43&61&6&52&4&48&34\\66&54&47&64&13&20&66&23&31&36&55&42&11&27&39&17\\29&28&31&44&54&9&60&44&64&1&59&12&38&41&57&32\\10&18&2&58&38&5&35&14&55&16&22&61&18&13&55&46\\28&53&39&66&55&0&46&21&7&49&30&1&60&15&37&63\\54&24&9&29&24&42&51&50&35&10&50&18&16&44&10&7\\64&24&63&33&49&14&47&35&33&30&59&4&20&24&66&1\\59&37&43&25&55&7&21&26&62&44&64&45&66&4&46&62\\7&47&2&13&0&40&21&1&11&7&56&14&60&41&21&62\\51&20&31&4&55&36&27&22&61&42&32&51&56&20&26&5\\36&46&36&42&59&65&33&58&43&20&40&50&15&7&11&63\\18&15&53&12&3&8&50&66&55&39&42&8&10&31&1&29\\66&34&55&50&26&57&3&35&30&41&39&55&35&40&40&27\\
		\end{smallmatrix} \right]
		\]

		Bob chooses a random integer  $m_1 = 70$ (for sake of brevity, in the present example, we have chosen only one automorphism instead of two), computes an automorphism $M_B=A_1^{m_1}$ of the variety $ V_A $ and then a $ \sigma $-embedding 
		
		\[
		\sigma_{B}:\mathbb{P}^{1} \times \mathbb{P}^{1}\to \mathbb{P}^{19} \] represented by the matrix 
		\[ M_B = \left[
		\begin{smallmatrix}	
		46&26&7&53&4&8&3&63&15&66&49&2&4&9&56&4\\61&24&1&50&17&19&48&28&38&45&44&35&36&49&4&48\\22&28&65&20&18&24&4&65&37&17&57&7&29&59&20&35\\58&21&5&64&66&8&17&20&27&17&14&50&53&40&28&18\\25&56&30&19&39&62&44&21&37&1&22&43&8&13&30&19\\43&14&16&44&60&60&21&10&48&65&53&20&46&12&35&44\\27&52&64&40&51&25&13&62&48&57&53&14&43&25&42&50\\38&56&61&42&26&42&29&20&23&34&56&34&29&60&25&24\\15&35&27&50&4&7&56&25&66&36&47&38&38&41&3&28\\10&38&43&46&56&37&28&48&44&51&23&52&5&48&22&50\\34&50&50&37&12&29&11&40&31&5&52&13&53&58&29&52\\37&20&40&22&49&37&58&0&8&19&26&34&52&35&19&53\\19&39&22&65&5&7&62&35&51&43&16&35&48&43&38&44\\60&3&49&48&44&9&39&26&39&50&24&53&52&4&0&43\\0&23&54&63&27&26&59&23&8&50&62&42&37&48&26&60\\5&14&33&28&36&24&17&18&15&47&30&49&37&34&55&16\\59&57&9&62&38&57&61&30&5&13&58&64&36&40&18&61\\38&48&39&43&40&9&35&51&14&40&60&61&50&2&25&22\\38&51&37&66&26&25&40&6&64&22&62&16&12&57&14&9\\21&15&48&53&36&20&29&25&18&18&2&0&62&64&36&63 \\
		\end{smallmatrix} \right]
		\]
		He also computes a hyperplane 
		$ H_B $, which is  again the closed subscheme of projective space of dimension over $ \mathbb{F}_q $ defined by:
		\begin{eqnarray*}
			&x_0 + 4400x_1 + 4433x_2 + 8909x_3 + 26482x_4 + 3162x_5 - 4113x_6 + 24289x_7 \\
			& -15946x_8 +4813x_9 
		\end{eqnarray*} 
		and is identifed by an element of coker$ (M_B) $.
		
		\textbf{Key Exchange:}\\
		Bob computes the pullback 
		
		\begin{equation*}
		\begin{split}
		\sigma_{B}^{*}H_{A} &=  12x_0^3x_2^3 + 2x_0^2x_1x_2^3+13x_0x_1^2x_2^3 - 
		7x_1^3x_2^3 + 26x_0^3x_2^2x_3 +  29x_0^2x_1x_2^2x_3\\
		&- 19x_0x_1^2x_2^2x_3 + 16x_1^3x_2^2x_3 + 29x_0^3x_2x_3^2 - 30x_0x_1^2x_2x_3^2 +
		9x_1^3x_2x_3^2 \\
		& - 26x_0^3x_3^3 + 
		12x_0^2x_1x_3^3 - 22x_0x_1^2x_3^3 - 24x_1^3x_3^3
		\end{split}
		\end{equation*}
		
		and finds a component of bi-degree (2,2) 
		\begin{equation*}
		\begin{split}
		C_1 &= -x_0^2x_2^2 + 6x_0x_1x_2^2 + x_1^2x_2^2 + 28x_0^2x_2x_3 - 12x_0x_1x_2x_3 + 30x_1^2x_2x_3 - 4x_0^2x_3^2 \\
		&- 24x_0x_1x_3^2 - 9x_1^2x_3^2
		\end{split}
		\end{equation*}
		and computes the j-invariant $ j_B = j(C_1) = 57 \in \mathbb{F}_q$.
		Alice computes the pullback 
		\begin{equation*}
		\begin{split}
		\sigma^{(s)^{*}_{A}}H_B  &= -32x_0^2x_1x_2^3 - 15x_0x_1^2x_2^3 + 24x_1^3x_2^3 - 7x_0^2x_1x_2^2x_3 - 16x_0x_1^2x_2^2x_3 \\
		&- 29x_0^3x_2x_3^2 + 19x_0^2x_1x_2x_3^2 - 11x_0x_1^2x_2x_3^2 - 16x_1^3x_2x_3^2 - 27x_0^3x_3^3 \\
		&-5x_0^2x_1x_3^3 - 7x_0x_1^2x_3^3 + 26x_1^3x_3^3
		\end{split}
		\end{equation*} 
		and finds a component of bi-degree (2,2) 
		\begin{equation*}
		\begin{split}
		C_2 &= 33x_0x_1x_2^2 + x_1^2x_2^2 - 23x_0x_1x_2x_3 - 2x_1^2x_2x_3 + 32x_0^2x_3^2 - 19x_1^2x_3^2
		\end{split}
		\end{equation*}
		and computes the j-invariant $ j_A = j(C_2) = 57 \in \mathbb{F}_q$, which is the common key.
	\end{example}
	This version of the algorithm has some practical issues: the memory required to reach a good level of security is remarkably larger than the one needed by some of the most practical existing post-quantum cryptosystems, like lattice based ones and SIDH. In fact each user has a public key comprehensive of matrices of large size over $ \mathbb{F}_{q} $. Another issue is the speed of the key exchange, in particular the bottleneck is the computation of products of big matrices.
	
	\subsection{QSI algorithm second version}
	In this section we describe some modifications of the previous algorithm  which significantly improve the speed of the key exchange. It is not clear that whether this makes the protocol prone to some attacks by the extra information revealed. 
	
	Experimental evidence shows that the computationally heavy part of the key exchange is the calculation of  $ M_{B}' $: it is required to compute powers of matrices of large size. By choosing $ A_1,A_{2} $ to be generalized permutation matrices, Alice can dramatically speed up the computations of this product of matrices. This can be achieved by choosing $ A_{1}',A_{2}' $ and $ M_{A} $ to be generalized permutation matrices. A drawback is that the order of $ A_{i} $ is bounded by $ 4(q-1) $. So in this context, Alice should use a much larger value of $ q $ to reach the same level of security.
	
	Besides the improvement in the speed of the key exchange, the main issue remains the size of the public key. The major contribution to this size is by the matrix $ M_{A}^{(p)} $, which requires around ${ {l}\cdot(m+1)^{2}\cdot {m+3 \choose 3}} $ bits, where $ l $ is the binary length of $ q $. For example in the case $ l=64 , m=8$ this values is $ 855360 $, which is unpractical. A problem is to reduce the size of $ M_{B}^{(p)} $, which can be achieved by taking sparse or small entry matrices.
	
	\section{QSI Key Exchange with TTP}
	In this section we describe a variation of the key exchange where a trusted third party is allowed. TTP are not used in the design of the most common key exchange protocols, on the other hand they are required in several real-life applications. The advantage of the TTP (which will be called "Trent") in the case of the QSI key exchange protocol is that it allows the users to have a high level of security with a considerably short public key size and less time in common key generation. \\
	\textbf{Trent secret data}:
	\begin{itemize}
		\item A Veronese variety $ V_{T}\subset\mathbb{P}^{{m+3\choose 3}-1} $ for $ m\in\mathbb{N} $.
	\end{itemize}
	\textbf{Trent public data}:
	\begin{itemize}
		\item A finite field $ \mathbb{F}_{q} $.
		\item A positive integer $ m $.
		\item A matrix $ M_T $ of size ${m+3\choose 3}\times(m+1)^{2} $ representing a $\sigma $-embedding of $ \mathbb{P}^{1}\times\mathbb{P}^{1} $ into $ V_{T} $. 
		\item Two $ {m+3\choose 3}\times{m+3\choose 3} $ matrices $ T_{1},T_{2} $ representing automorphisms of $ V_{T} $ of order $ q^{4}-1 $.
	\end{itemize}
	Suppose that a user $ U $ wants to register to Trent's key exchange system. Then $U$ has to:
	\begin{enumerate}
		\item Download Trent's public data.
		\item Choose random integers $ 1\leq m_{1}^{U},m_{2}^{U},m_{1}'^{U},m_{2}'^{U}\leq q^{4}-1 $ and to compute $ M_{U}:=T_{1}^{m_{1}^{U}}T_{2}^{m_{2}^{U}}T_{1}^{m_{1}'^{U}}T_{2}^{m_{2}'^{U}}$. This corresponds to the choice of a $ \sigma $-embedding $ {\sigma_{U}:\mathbb{P}^{1}\times\mathbb{P}^{1}\to\mathbb{P}^{{m+3 \choose 3}-1}} $ such that $ {\operatorname{Im}(\sigma_U)\subset V_{T}} $.
		\item Compute a random $ H_{U}\in\operatorname{coker}(M_{U}) $. This corresponds to the choice of a 	hyperplane containing $ \operatorname{Im}(\sigma_U) $.
	\end{enumerate}
	Suppose that Alice and Bob want to generate a common key, then:
	\begin{enumerate}
		\item Alice downloads  Bob's public key $ H_{B} $, she computes $ \sigma_{A}^{*}H_{B} $ and she finds a component of bi-degree $ (2,2) $. Then she computes its $ j $-invariant $ j_{A} $.
		\item Bob downloads  Alice's public key $ H_{A} $, he computes $ \sigma_{B}^{*}H_{A} $ and he finds a component of bi-degree $ (2,2) $. Then he computes its $ j $-invariant $ j_{B} $.
	\end{enumerate}
	Now, $ j_{A}=j_{B} $ is the common key of Alice and Bob.
	\begin{remark}
		$ H_{U} $ is the public key. Its binary length is $ {m+3\choose 3}\cdot l $, where $ l $ is the binary length of $ q $. For example, imposing $ {m+3\choose 3}-(m+1)^2-1  $ coefficients equal to 0 and one coefficient equal to $ 1 $, $ H_U $ can be described by $ l(m+1)^2 $ bits. For $ l=64 $ and $ m=8 $ it is equal to 5184 bits: shorter than in SIDH or NTRU at 128-bit security level.
	\end{remark}
	\section{Underlying mathemtical problems}
	Suppose that an eavesdropper Eve wants to break the protocol. Then she has the following possible options:
	\begin{enumerate}
		\item She can try to find explicitly the Veronese variety, i.e. the  $ {m+3\choose 3}\times{m+3\choose 3} $  matrix $ M_{U} $ in the version without TTP or $ M_{T} $ in the TTP version. Note that, in the case of TTP version, if Eve is able to solve this problem, then she is able to break any communication between two users of the Trent system.
		\item  She can try to find $ M_{U}^{(s)} $ in the version without TTP, or $ M_{U} $ in the version with TTP. Note that, in the case of TTP version, if Eve is able to solve this problem, then she is able to break any communication between $ U $ and other users of the Trent system.
		\item She can find the explicit equations of $ V_{A} =M_{A}V_{3,m}$ in the version without TTP, or of $ V_{T} $ in the version with TTP. Then she can try to attack the single communication between Alice and Bob by searching the primary components of $ V_{A}\cap H_{A}\cap H_{B} $ (without TTP) or $ V _{T}\cap H_{A}\cap H_{B}$ (with TTP).
	\end{enumerate}
	
	Suppose that Eve wants to follow the first option. Also, suppose that we are in the case of the TTP version (in the other case, the problem is completely analogous). A possible attempt is to find $ M_{T} $ by solving a system of polynomial equations: she writes $ M_{T} $ as a matrix of $ {m+3\choose 3}^2 $ unknowns. The condition that $ T_{i} $ is an automorphism of $ V_{T} $ means that
	\[
	T_{i}M_{T}=M_{T}\operatorname{GLEmb(3,m)}(A)
	\]
	for some matrix $ A\in\operatorname{GL}(4) $.
	If Eve eliminates the variables $ \{a_{ij}\} $, then she gets polynomial conditions of extremely high degree on $ m_{i,j} $ (note that\\ $ M_{T}\operatorname{GLEmb(3,m)}(A) $ is a matrix whose components are bi-homogeneous polynomials whose bi-degree is $ (1,m) $ in the set of variables $ \{m_{ij}\} $ and $ \{a_{ij}\} $).
	The condition that $ \sigma_{U} $ is a $ \sigma $-embedding such that $ \operatorname{Im}(\sigma_{U})\subset V_{T} $ means that
	\[
	\sigma_{U}=M_{T}\circ v_{3,m}\circ A \circ s_{1,1}
	\]
	for some $ A\in\operatorname{Aut}(\mathbb{P}^{3}) $. Note that, like above, the matrix $ M_{E} $ representing the $ \sigma $-embedding $ M_{T}\circ v_{3,m}\circ A \circ s_{1,1} $ is a matrix whose components are  bi-homogeneous polynomials whose bi-degree is $ (1,m) $ in the set of variables $ \{m_{ij}\} $ and $ \{a_{ij}\} $. If Eve eliminates the variables $ \{a_{ij}\} $, then she gets polynomial conditions of extremely high degree on $ \{m_{ij}\} $.
	
	Suppose that Eve chooses the second option to attack the system. Then she wants to find
	$  m_{1}^{U},m_{2}^{U},m_{1}'^{U},m_{2}'^{U} $ such that $$ H_{U}\in \operatorname{coker}\left(T_{1}^{m_{1}^{U}}T_{2}^{m_{2}^{U}}T_{1}^{m_{1}'^{U}}T_{2}^{m_{2}'^{U}}\right).$$ Since the product of matrices is non-commutative, it seems that methods similar to Pollard rho or baby-step giant-step are not possible in this case. Since the family of quadric surfaces of $ \mathbb{P}^{3} $ is a 9-dimensional projective space, using a brute force attack (just choosing random values of $ m_{1}^{U},m_{2}^{U},m_{1}'^{U},m_{2}'^{U} $), Eve should find a $ \sigma $-embedding $ \sigma_{E}$ such that $ \operatorname{Im}(\sigma_{E})=\operatorname{Im}(\sigma_{U}) $ in around $ q^{9} $ trials (instead of $ q^{16} $ as one would expect).
	
	Suppose that Eve wants to choose the third option: she has first to compute the polynomial equations defining $ V_{T} $. This is not a hard problem because $ V_{T} $ is defined by $ m(m^{2}-1)(m^3 + 12m^2 + 59m + 66) $ degree-2 homogeneous polynomials  by proposition \ref{prop appendix} and that  can be found by  the methods of linear algebra. After this, one needs to find the irreducible components of the variety $ V_{T}\cap H_{A}\cap H_{B} $, this corresponds to find the primary decomposition of the ideal generated by the quadratic polynomials defining $ V_{T} $ and the two linear polynomials defining respectively $ H_{A} $ and $ H_{B} $.
	\section*{Acknowledgements} We would like to thank Ankan Pal for his constructive suggestions on an earlier version of this paper.
	
	\appendix
	\section{Probability that a random curve of bi-degree (2,2) in $ {\mathbb{P}^{1}\times\mathbb{P}^1}$ is singular.}
	Let $ \kappa$ be an algebraically closed field, then a general curve $ C \subset \mathbb{P}^{1}\times\mathbb{P}^{1}$ of bi-degree $ (2,2) $ is non-singular. More precisely:
	\begin{proposition}
		Let \[
		\mathcal{S}:=\{a_{ij}X_{0}^{2-i}X_{1}^{i}Y_{0}^{2-j}Y_{1}^{j}=0: i,j \in \{0,1,2\}, a_{ij}\in \kappa\}
		\]
		be the set of curves of bi-degree (2,2) defined over $ \mathbb{K} $. Identify $C\in \mathcal{S} $ with its coefficients (up to scalar multiplication) $ [a_{ij}]\in\mathbb{P}^{8} $. Then the condition of being singular is closed in the Zariski topology of $ \mathbb{P}^{8} $, i.e. is defined by a set of homogeneous polynomial equations in $ [a_{ij}] $. 
	\end{proposition}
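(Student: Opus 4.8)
The plan is to realize the locus of singular curves in $\mathbb{P}^8$ as the image of a closed incidence variety under a proper projection, so that closedness follows at once from the fundamental theorem of elimination theory.

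\textbf{Step 1: the incidence variety.} I would introduce
\[
\Sigma:=\{([a_{ij}],p)\in\mathbb{P}^{8}\times(\mathbb{P}^{1}\times\mathbb{P}^{1}):p\text{ is a singular point of }C_{[a_{ij}]}\},
\]
where $C_{[a_{ij}]}=V(F)$ with $F=\sum_{i,j}a_{ij}X_{0}^{2-i}X_{1}^{i}Y_{0}^{2-j}Y_{1}^{j}$ and $p=([X_{0}:X_{1}],[Y_{0}:Y_{1}])$. The claim to establish is that $p$ is a singular point of $C_{[a_{ij}]}$ if and only if the four partials $\partial F/\partial X_{0},\partial F/\partial X_{1},\partial F/\partial Y_{0},\partial F/\partial Y_{1}$ all vanish at $p$. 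One direction uses the bihomogeneous Euler relations $X_{0}\,\partial F/\partial X_{0}+X_{1}\,\partial F/\partial X_{1}=2F$ (and likewise in $Y$), which in particular force $F(p)=0$; for the converse and for the exact matching of ``the differential of $F$ kills $T_{p}(\mathbb{P}^{1}\times\mathbb{P}^{1})$'' with ``all four partials vanish'', I would use the $\mathrm{PGL}_{2}\times\mathrm{PGL}_{2}$-action to reduce to $p=([1:0],[1:0])$ and compute directly: there the four partials evaluate to $2a_{00},a_{10},2a_{00},a_{01}$, while in the affine chart $X_{0}=Y_{0}=1$ the curve is $f=\sum a_{ij}x^{i}y^{j}=0$ and $(0,0)$ is singular iff $a_{00}=a_{10}=a_{01}=0$. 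Alternatively, and avoiding any characteristic bookkeeping, one simply defines $\Sigma$ chart by chart through $f=f_{x}=f_{y}=0$ on each of the four standard affine opens of $\mathbb{P}^{1}\times\mathbb{P}^{1}$ and patches. Either way the defining conditions are polynomial in $[a_{ij}]$ and bihomogeneous in $p$, so $\Sigma$ is Zariski closed in $\mathbb{P}^{8}\times(\mathbb{P}^{1}\times\mathbb{P}^{1})$.

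\textbf{Step 2: project and use properness.} Let $\pi\colon\mathbb{P}^{8}\times(\mathbb{P}^{1}\times\mathbb{P}^{1})\to\mathbb{P}^{8}$ be the first projection. Since $\mathbb{P}^{1}\times\mathbb{P}^{1}$ is complete (equivalently, by the main theorem of elimination theory), $\pi$ is a closed map, so $\pi(\Sigma)\subset\mathbb{P}^{8}$ is closed. A curve is singular precisely when it has at least one singular point, hence $\{[a_{ij}]:C_{[a_{ij}]}\text{ singular}\}=\pi(\Sigma)$, and a closed subset of $\mathbb{P}^{8}$ is by definition the zero locus of a homogeneous ideal. If one wants explicit equations rather than mere closedness, one instead eliminates $[X_{0}:X_{1}]$ and $[Y_{0}:Y_{1}]$ from the system of Step 1 by iterated resultants, which produces the discriminant hypersurface.

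\textbf{Main obstacle.} The only genuinely non-formal point is the Jacobian criterion in the bigraded setting: one must be sure that the vanishing of the four homogeneous partials of a bi-degree-$(2,2)$ form is equivalent to singularity of the associated curve inside $\mathbb{P}^{1}\times\mathbb{P}^{1}$, and not of its cone. The reduction to a single base point (or equivalently the chart-wise definition) settles this cleanly; the rest is the standard ``incidence variety plus proper projection'' template.
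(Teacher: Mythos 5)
Your proof is correct, and it is worth noting that the paper itself offers no proof of this proposition at all: it is stated bare and followed only by an informal discussion of what closedness means for the density of smooth curves. Your argument therefore supplies something the paper omits, and it is the standard and right way to do it: form the incidence correspondence $\Sigma\subset\mathbb{P}^{8}\times(\mathbb{P}^{1}\times\mathbb{P}^{1})$ of pairs (curve, singular point), check it is closed, and push it down by the first projection, which is a closed map because $\mathbb{P}^{1}\times\mathbb{P}^{1}$ is complete. Your handling of the one delicate point is also sound: the naive description of $\Sigma$ by the simultaneous vanishing of the four bihomogeneous partials is only equivalent to singularity when the Euler relations $X_{0}\partial_{X_{0}}F+X_{1}\partial_{X_{1}}F=2F$ actually recover $F$, which fails in characteristic $2$; your chart-by-chart definition via $f=f_{x}=f_{y}=0$ on the four standard affine opens sidesteps this and works over any field (one could equally just add the equation $F=0$ to the four partials). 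Your explicit computation at $p=([1:0],[1:0])$, where the partials evaluate to $2a_{00},a_{10},2a_{00},a_{01}$ against the affine singularity conditions $a_{00}=a_{10}=a_{01}=0$, correctly identifies where the characteristic enters. The closing remark that iterated resultants produce the explicit discriminant hypersurface is a nice bonus consistent with the proposition's phrasing ``defined by a set of homogeneous polynomial equations.''
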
 
	The above proposition states that singular curves are very few compared to the smooth ones. You may imagine sets defined by polynomial equations in $ \mathbb{R}^{n} $ or $ \mathbb{C}^{n} $: these sets have a smaller dimension than the one of the ambient space, so their measure is 0. A similar situation occurs for algebraically closed fields. If we consider curves defined over a finite field $ \mathbb{F}_{q} $ then the probability of being singular is not 0, but it should  decrease when $ q $ increases and it should be negligible when $ q $ is very large.
	\label{Smoothness of (2,2)-curve}
	\section{$ j $-invariant of a $ (2,2) $-curve in $ {\mathbb{P}^{1}\times\mathbb{P}^{1}} $}
	A standard result in the theory of algebraic curves is that there is a bijection
	\[
	\left\lbrace\begin{matrix}
	\text{genus 1 curves up}\\
	\text{to isomorphism.}
	\end{matrix}
	\right\rbrace\longleftrightarrow
	\left\lbrace\begin{matrix}
	\text{4-tuples of distinct points of }\mathbb{P}^{1}\\
	\text{up to automorphism.}
	\end{matrix} \right\rbrace
	\]
	see for example \cite[19.5]{Ravi Vakil}. Let $ [C] $ be an isomorphism class of genus 1 curves. Let $ {\pi:C\to\mathbb{P}^{1}} $ be any degree 2 morphism. Then the 4-tuple of points associated to $ C $ is the branch locus of $ \pi $, which are by definition the points  $ P\in\mathbb{P}^{1} $ such that $\#\pi^{-1}(P)=1$.
	\begin{example}
		Let $ E $ be the elliptic curve defined by the equation $ Y^{2}Z=f(X,Z) $, where $ f(X,Z)=(X-aZ)(X-bZ)(X-cZ) $, let
		\begin{center}
			\begin{tikzpicture}
			\node (1) at (0,0) {$E$};
			\node (2) at (3,0) {$\mathbb{P}^{1}$};
			\node (3) at (0,-0.5) {$[X:Y:Z]$};
			\node (4) at (3,-0.5) {$[X:Z]$};
			\path[->]
			(1) 	edge node[above]{$\pi$}	(2);
			\path[|->]
			(3) edge (4);
			\end{tikzpicture}
		\end{center}
		be the degree 2 map to $ \mathbb{P}^{1} $. The branch locus of $ \pi $ is the set ${ \{[1:0],[a:1],[b:1],[c:1]\}} $.
	\end{example}
	\begin{example}
		Let $ C \subset \mathbb{P}^{1}\times\mathbb{P}^{1}$ be a smooth curve of bi-degree (2,2) and let
		\begin{center}
			\begin{tikzpicture}
			\node (1) at (0,0) {$C$};
			\node (2) at (1.5,0) {$\mathbb{P}^{1}$};
			\node (3) at (0,-0.5) {$(P,Q)$};
			\node (4) at (1.5,-0.5) {$P$};
			\path[->]
			(1) 	edge node[above]{$\pi$}	(2);
			\path[|->]
			(3) edge (4);
			\end{tikzpicture}
		\end{center}
		be the first projection map.  Let
		\[
		F(X_{0},X_{1};Y_{0},Y_{1})=Y_{0}^{2}F_{0}(X_{0},X_{1})+Y_{0}Y_{1}F_{1}(X_{0},X_{1})+Y_{1}^{2}F_{2}(X_{0},X_{1})
		\]
		be the defining polynomial of $ C $. Then the branch locus of $ \pi $ is the set of points $ P=[p_{0},p_{1}] $ for which the equation
		\[
		F(p_{0},{p}_{1};Y_{0},Y_{1})=0\]
		has one (double) solution. Equivalently $ F(p_{0},p_{1},Y_{0},Y_{1}) $ is a quadratic binary form with vanishing discriminant. So $ [p_0:p_{1}] $ is a root of the binary quartic form
		\[
		G(X_{0},X_{1}):=F_{1}(X_{0},X_{1}) ^{2}-4F_{0}(X_{0},X_{1})F_{2}(X_{0}X_{1})
		\]
		The invariants of a binary quartic form under the action of $ \operatorname{GL}(2)$ is classically well known (see for example \cite[199,p.189]{Salmon}, \cite[2.2]{invariant binary quintic}, or \cite[10.2]{Dolgachev Invariant Theory}): if we write
		\[
		G(X_{0},X_{1})=q_{0}X_{0}^{4}+ q_{1}X_{0}^{3}X_{1}+q_{2}X_{0}^{2}X_{1}^{1}+q_{3}X_{0}X_{1}^{3}+q_{4}X_{1}^{4}
		\]
		and we define
		\begin{eqnarray*}
			S &:=& q_{0}q_{4}-\dfrac{q_{1}q_{3}}{4} + \dfrac{q_2^{2}}{12}\\
			T &:=& \dfrac{q_{0}q_{2}q_{4}}{6} +\dfrac{q_{1}q_{2}q_{3}}{48} - \dfrac{q_{2}^{3}}{216}
			- \dfrac{q_{0}q_{3}^{2}}{16}-\dfrac{q_{1}^{2}q_{4}}{16}
		\end{eqnarray*}
		then $ \dfrac{S^{3}}{S^{3}-27T^2} $ is the invariant of $ G $ under the action of $ \operatorname{GL}(2) $ or equivalently the invariant of the set of points given by the roots of $ G $ under the action of $ \operatorname{PGL}(2) $. This is equal to the $ j $-invariant of the curve $ C $.
	\end{example}
	\label{Irreducibility (m,m) divisor}
	\label{j invariant (2,2)-curve}
	\section{Irreducibility of curves of bi-degree $ (d,d) $.}
	The pullback of a hyperplane $H$ through a $\sigma$-embedding is a curve of bi-degree $(m,m)$ in $ \mathbb{P}^{1}\times\mathbb{P}^{1} $, which has a component of bi-degree $(2,2)$. An important task is to have a well defined key exchange is to know if the residual $(m-2,m-2)$ curve is irreducible or not. We can assume that this residual curve is randomly chosen among the curves of bi-degree $(m-2,m-2)$, so a general question is: what is the probability that a curve of bi-degree $(d,d)$ in $ \mathbb{P}^{1}\times\mathbb{P}^{1} $ is irreducible? 
	\section{Irreducible components of $V_T\cap H_A\cap H_B$ }
	The next proposition gives the implicit description of any Veronese variety as intersection of quadric hypersurfaces of the ambient space. Without loss of generality, we can suppose that it is the standard Veronese variety. In this section, some technical terms from algebraic geometry are used.
	\begin{proposition} \label{prop appendix}
		The Veronese variety $ V_{3,m} $ is an intersection of 
		\[ h_{m}:=m(m^{2}-1)(m^3 + 12m^2 + 59m + 66) 
		\] 
		linearly independent quadric hypersurfaces.
	\end{proposition}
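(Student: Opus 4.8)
The plan is to split the statement into two parts: (i) the linear space of quadratic forms on $\mathbb{P}^{N}$ vanishing on $V_{3,m}$ (where $N+1=\binom{m+3}{3}$) has dimension exactly $h_{m}$; and (ii) the common zero locus of these quadrics is precisely $V_{3,m}$, so that $V_{3,m}$ really is an intersection of $h_{m}$ linearly independent quadric hypersurfaces.

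For part (i), identify the homogeneous coordinates $Z_{\alpha}$ of $\mathbb{P}^{N}$ with the degree-$m$ monomials $X^{\alpha}=X_{0}^{\alpha_{0}}\cdots X_{3}^{\alpha_{3}}$, $|\alpha|=m$, of $S:=\kappa[X_{0},\dots,X_{3}]$, so that $v_{3,m}([X_{0}:\cdots:X_{3}])=[X^{\alpha}]_{|\alpha|=m}$. Restricting a quadratic form $\sum_{\alpha,\beta}c_{\alpha\beta}Z_{\alpha}Z_{\beta}$ to $V_{3,m}$ produces the degree-$2m$ form $\sum_{\alpha,\beta}c_{\alpha\beta}X^{\alpha+\beta}\in S_{2m}$; this defines a linear map
\[
\rho\colon\{\text{quadratic forms in the }Z_{\alpha}\}\longrightarrow S_{2m},\qquad Z_{\alpha}Z_{\beta}\longmapsto X^{\alpha+\beta},
\]
whose domain has dimension $\binom{N+2}{2}$ and whose kernel is exactly the space of quadrics through $V_{3,m}$. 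The crucial observation is that $\rho$ is surjective: any exponent vector $\gamma$ with $|\gamma|=2m$ splits as $\gamma=\alpha+\beta$ with $|\alpha|=|\beta|=m$, so $X^{\gamma}=\rho(Z_{\alpha}Z_{\beta})\in\operatorname{im}\rho$, whence $\operatorname{im}\rho=S_{2m}$, of dimension $\binom{2m+3}{3}$. Therefore the number of linearly independent quadrics through $V_{3,m}$ is $\binom{N+2}{2}-\binom{2m+3}{3}=\binom{\binom{m+3}{3}+1}{2}-\binom{2m+3}{3}$, and expanding this and extracting the factor $m(m-1)(m+1)$ yields the closed form $h_{m}$ in the statement. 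That last step is a routine polynomial identity.

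For part (ii), I would invoke the classical fact that for $m\geq 2$ the kernel of the surjection $\kappa[Z_{\alpha}]\twoheadrightarrow\bigoplus_{k\geq 0}S_{km}$ onto the $m$-th Veronese subring of $S$ — that is, the homogeneous ideal $I(V_{3,m})$ — is generated by the binomial quadrics $Z_{\alpha}Z_{\beta}-Z_{\alpha'}Z_{\beta'}$ with $\alpha+\beta=\alpha'+\beta'$ (see e.g.\ \cite{Ravi Vakil}, \cite{Shafarevich}). Since this ideal is prime (it is the kernel of a map onto a domain) and its degree-$2$ component $I(V_{3,m})_{2}$ generates it, we get $V\!\big(I(V_{3,m})_{2}\big)=V\!\big(I(V_{3,m})\big)=V_{3,m}$; together with part (i), any basis of $I(V_{3,m})_{2}$ consists of $h_{m}$ linearly independent quadrics cutting out $V_{3,m}$, which is the assertion.

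The main obstacle is part (ii): showing the quadrics genuinely cut out $V_{3,m}$ and nothing larger. If one wants a self-contained argument rather than quoting generation in degree $2$, one must show directly that any point $[z_{\alpha}]\in\mathbb{P}^{N}$ annihilated by all the binomials $Z_{\alpha}Z_{\beta}-Z_{\alpha'}Z_{\beta'}$ has the form $[X^{\alpha}]$ for some $[X_{0}:\cdots:X_{3}]\in\mathbb{P}^{3}$: one picks a nonzero coordinate $z_{\alpha_{0}}$, uses the relations to force a pure-power coordinate $z_{(m,0,0,0)}$ (after relabeling the $X_{i}$) to be nonzero, and then recovers the $X_{i}$ up to scalar from ratios such as $z_{(m-1,1,0,0)}/z_{(m,0,0,0)}$ — a finite but bookkeeping-heavy case analysis. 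Part (i), by contrast, is essentially immediate once the monomial-splitting surjectivity of $\rho$ is noted; the only residual work there is the elementary algebra reducing $\binom{\binom{m+3}{3}+1}{2}-\binom{2m+3}{3}$ to $h_{m}$.
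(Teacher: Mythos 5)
Your proposal is correct, and on the counting step it follows essentially the same route as the paper: both arguments reduce the number of independent quadrics through $V_{3,m}$ to $h^{0}(\mathcal{O}_{\mathbb{P}^{N}}(2))-h^{0}(\mathcal{O}_{\mathbb{P}^{3}}(2m))=\tfrac{1}{2}\binom{m+3}{3}\bigl(\binom{m+3}{3}+1\bigr)-\binom{2m+3}{3}$. The difference is that the paper gets the surjectivity of the restriction map by citing projective normality of the Veronese variety, whereas you prove it directly from the splitting of any degree-$2m$ monomial as a product of two degree-$m$ monomials; your version is more elementary and self-contained, the paper's is shorter and generalizes to higher twists. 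More significantly, your part (ii) — that the degree-$2$ piece of the ideal actually cuts out $V_{3,m}$ and nothing larger — is simply absent from the paper's proof, which only counts quadrics; you are right that this is the genuine content of the word ``intersection'' in the statement, and invoking generation of $I(V_{3,m})$ in degree $2$ (or the direct binomial case analysis you sketch) is the correct way to close it. One caveat: the ``routine polynomial identity'' you defer does not hold as the proposition is printed. One has
\[
\tfrac{1}{2}\tbinom{m+3}{3}\bigl(\tbinom{m+3}{3}+1\bigr)-\tbinom{2m+3}{3}=\frac{m(m^{2}-1)(m^{3}+12m^{2}+59m+66)}{72},
\]
so the displayed $h_{m}$ is off by a factor of $72$ (for $m=8$ the correct value is $12726$, which is what the paper's own example reports). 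This is a typo in the proposition rather than a flaw in your method, but you should not certify the identity without the denominator.
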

	\begin{proof} First of all we need to compute 
		$ h^{0}(\mathcal{I}_{V_{3,m}}(2))$. Since  $ V_{3,m} $ is projectively normal, then
		\begin{eqnarray*}
			h^{0}(\mathcal{I}_{V_{3,m}}(2)) &=& h^{0}(\mathcal{O}_{\mathbb{P}^{N_{3,m}}}(2))-
			h^{0}(\mathcal{O}_{V_{3,m}}(2))\\
			&=& h^{0}(\mathcal{O}_{\mathbb{P}^{N_{3,m}}}(2))-h^{0}(\mathbb{O}_{\mathbb{P}^{3}}(2m))\\
			&=& \dfrac{1}{2}\left[	{m+3\choose 3}+1\right]{m+3\choose 3}- {2m+3\choose 3}
		\end{eqnarray*}
		which is equal to the desired value.
	\end{proof}
	\begin{example}
		For $ m=8 $ there are 12726 linearly independent quadric hypersurfaces containing $ V_{3,m} $. It is the condition in the linear system of quadric surfaces of $ \mathbb{P}^{N_{3,m}} $ of codimension 969.	
	\end{example}
	A possible approach to find the quadratic equations defining $ V_{T} $ is to generate $ d_m - h_{m} $ points of $V_T$, where $d_m$ is the dimension of the space of all quadric hypersurfaces, sufficiently random points inside $ V_{T} $: this can be done easily using the knowledge of the $ \sigma $-embedding and of some of its automorphisms. After that one can find a basis of the family of quadratic polynomials vanishing on those points. These quadratic polynomials generate the ideal $ I_{V_{T}} $.
	
	\begin{proposition}
		$ V_{3,m} \subset \mathbb{P}^{N_{3,m}}$ is a 3-dimensional projective variety of degree $ m^{3} $. 
	\end{proposition}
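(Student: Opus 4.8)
The plan is to establish the two assertions—that $V_{3,m}$ is $3$-dimensional and that its degree is $m^3$—separately, each reducing to a standard property of the Veronese map. For the dimension, the point is that the standard Veronese embedding $v_{3,m}\colon\mathbb{P}^3\to\mathbb{P}^{N_{3,m}}$ is a closed immersion: the degree-$m$ monomials in $X_0,\dots,X_3$ separate points and tangent directions of $\mathbb{P}^3$, so $v_{3,m}$ is an isomorphism onto its image. Hence $V_{3,m}\cong\mathbb{P}^3$ as a variety, which is irreducible of dimension $3$, and this property is unaffected by the projective change of coordinates $M$ appearing in $V_{3,m}^{M}$.

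For the degree I would use the hyperplane-section characterization: if $X\subset\mathbb{P}^N$ is irreducible of dimension $d$, then $\deg X$ is the number of points, counted with multiplicity, of $X\cap L$ for a general linear subspace $L$ of codimension $d$. Here $d=3$, so I intersect $V_{3,m}$ with three general hyperplanes $H_1,H_2,H_3$. The key identity is $v_{3,m}^{\ast}\mathcal{O}_{\mathbb{P}^{N_{3,m}}}(1)=\mathcal{O}_{\mathbb{P}^3}(m)$: a hyperplane of $\mathbb{P}^{N_{3,m}}$ is the zero locus of a linear form in the ambient coordinates, and these coordinates pull back through $v_{3,m}$ to the degree-$m$ monomials, so each $v_{3,m}^{-1}(H_i)$ is a hypersurface of degree $m$ in $\mathbb{P}^3$. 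For a general choice of the $H_i$ these three hypersurfaces meet transversally in finitely many reduced points, and Bézout's theorem gives exactly $m\cdot m\cdot m=m^3$ of them. Since $v_{3,m}$ is an isomorphism onto $V_{3,m}$, the set $V_{3,m}\cap H_1\cap H_2\cap H_3$ also consists of $m^3$ points, so $\deg V_{3,m}=m^3$.

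An alternative route, which sidesteps any genericity discussion, is to read the degree off the Hilbert polynomial: since $V_{3,m}$ is projectively normal (as already used in the proof of Proposition \ref{prop appendix}), its homogeneous coordinate ring in degree $k$ is $H^0(\mathbb{P}^3,\mathcal{O}(mk))$, of dimension $\binom{mk+3}{3}=\tfrac{1}{6}(mk+1)(mk+2)(mk+3)$; as a polynomial in $k$ this has leading term $\tfrac{m^3}{6}k^3$, and the degree of $V_{3,m}$ equals $3!$ times the leading coefficient, namely $m^3$. The computations are routine throughout; the only step deserving genuine care is the transversality of a general linear section in the Bézout argument (equivalently, on the other route, the classical projective normality of the Veronese variety, which I would quote rather than reprove).
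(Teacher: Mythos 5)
Your argument is correct, but it takes a more self-contained route than the paper, which disposes of the proposition in one line by quoting the general fact $\deg(V_{n,m})=m^{n}$ from Shafarevich (and does not even address the dimension claim explicitly, presumably regarding $V_{3,m}\cong\mathbb{P}^{3}$ as immediate from the Veronese map being a closed immersion). Your two derivations of the degree are both standard and valid: the hyperplane-section computation rests on the identity $v_{3,m}^{\ast}\mathcal{O}(1)=\mathcal{O}_{\mathbb{P}^{3}}(m)$ together with Bézout, and is essentially the proof one would find behind the citation the authors give; the Hilbert-polynomial route is cleaner in that it avoids the genericity discussion entirely, at the cost of invoking projective normality of the Veronese variety (which the paper already uses in the proof of the preceding proposition, so nothing new is being assumed). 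What your write-up buys is a proof readable without consulting the reference; what the paper's citation buys is brevity and the more general statement for all $V_{n,m}$. The one step you flag as needing care, transversality of a general linear section, is indeed the only non-routine point in the Bézout argument, and it is covered by Bertini over an algebraically closed field; over a small finite field one would prefer your Hilbert-polynomial argument, since degree is computed after base change to the algebraic closure anyway.
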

	\begin{proof}
		In general $\deg(V_{n,m})=m^n$, see for example \cite[4.2.7]{Shafarevich}
	\end{proof}
	After the computation of the primary components of $V_{T}\cap H_A\cap H_B$, Eve has to find the $j$-invariant of the component of degree $4m$. This is explained by the next proposition.
	\begin{proposition}
		The image of a curve of bi-degree $(2,2)$ through a $ \sigma $-embedding $	 \mathbb{P}^{1}\times\mathbb{P}^{1}\to \mathbb{P}^{{m+3\choose 3}-1}$ is a curve of degree $ 4m $.
	\end{proposition}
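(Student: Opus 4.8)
The plan is to transport the question to an intersection–number computation on $\mathbb{P}^{1}\times\mathbb{P}^{1}$, where it reduces to a one–line B\'ezout count, after first checking that a $\sigma$-embedding is genuinely an embedding so that ``degree'' is preserved and the image is honestly a curve.

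First I would verify that a $\sigma$-embedding $\sigma=v_{3,m}^{M'}\circ s_{1,1}^{M}$ is an embedding. Indeed $s_{1,1}$ is a closed immersion realizing $\mathbb{P}^{1}\times\mathbb{P}^{1}$ as the quadric $\Sigma_{1,1}\subset\mathbb{P}^{3}$, the Veronese map $v_{3,m}$ is a closed immersion of all of $\mathbb{P}^{3}$ into $\mathbb{P}^{N}$ with $N=\binom{m+3}{3}-1$, and $M,M'$ are linear automorphisms of the respective ambient spaces; hence $\sigma$ maps $\mathbb{P}^{1}\times\mathbb{P}^{1}$ isomorphically onto a closed subvariety of $\mathbb{P}^{N}$, and it carries any curve $C$ of bi-degree $(2,2)$ isomorphically onto a curve $\sigma(C)$. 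Since a closed immersion preserves Hilbert polynomials, $\deg\sigma(C)=\deg\bigl(\mathcal{O}_{\mathbb{P}^{N}}(1)|_{\sigma(C)}\bigr)=\deg\bigl(\sigma^{*}\mathcal{O}_{\mathbb{P}^{N}}(1)|_{C}\bigr)$, so it suffices to identify the pullback line bundle and restrict it to $C$.

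Next I would compute $\sigma^{*}\mathcal{O}_{\mathbb{P}^{N}}(1)$. Because $s_{1,1}$ is given by the bilinear forms $X_{i}Y_{j}$, one has $s_{1,1}^{*}\mathcal{O}_{\mathbb{P}^{3}}(1)\cong\mathcal{O}_{\mathbb{P}^{1}\times\mathbb{P}^{1}}(1,1)$; because $v_{3,m}$ is given by the degree-$m$ monomials in the four coordinates of $\mathbb{P}^{3}$, one has $v_{3,m}^{*}\mathcal{O}_{\mathbb{P}^{N}}(1)\cong\mathcal{O}_{\mathbb{P}^{3}}(m)$; and the linear automorphisms $M,M'$ leave these isomorphism classes unchanged. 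Composing gives $\sigma^{*}\mathcal{O}_{\mathbb{P}^{N}}(1)\cong\mathcal{O}_{\mathbb{P}^{1}\times\mathbb{P}^{1}}(m,m)$, so $\deg\sigma(C)$ is the intersection number on the quadric of the class $(2,2)$ of $C$ with the class $(m,m)$, namely $2m+2m=4m$. Equivalently, without line bundles: the preimage $\sigma^{-1}(H)$ of a general hyperplane $H\subset\mathbb{P}^{N}$ is exactly the ``pullback'' curve used throughout the protocol, a curve of bi-degree $(m,m)$ in $\mathbb{P}^{1}\times\mathbb{P}^{1}$; since $\sigma|_{C}$ is an isomorphism onto its image, $\sigma(C)\cap H$ is in bijection with $C\cap\sigma^{-1}(H)$, and B\'ezout on $\mathbb{P}^{1}\times\mathbb{P}^{1}$ yields $2\cdot m+2\cdot m=4m$ points.

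The computation itself is routine, so there is no real obstacle; the only point requiring a word of justification is that for a general hyperplane $H$ the intersection $\sigma(C)\cap H$ is reduced and zero-dimensional, so that the point count genuinely equals the degree. This is a standard Bertini-type statement (a general hyperplane meets the embedded curve transversally), and if one prefers to avoid it entirely, the line-bundle computation above is already a complete proof, valid for any effective divisor of class $(2,2)$ — smooth or not — precisely because $\sigma$ is a closed immersion.
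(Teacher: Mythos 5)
Your proof is correct and follows essentially the same route as the paper: both arguments identify the hyperplane class restricted to the embedded quadric with the class $\mathcal{O}_{\mathbb{P}^{1}\times\mathbb{P}^{1}}(m,m)$ (the paper by saying $\sigma$ is projectively equivalent to the map given by $|\mathcal{O}_{\mathbb{P}^{1}\times\mathbb{P}^{1}}(m,m)|$, you by computing $\sigma^{*}\mathcal{O}_{\mathbb{P}^{N}}(1)$) and then conclude via the intersection number $(2,2)\cdot(m,m)=4m$. Your write-up is simply more explicit about $\sigma$ being a closed immersion and about the Bertini-type justification, which the paper leaves implicit.
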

	\begin{proof}
		In fact it is projectively equivalent to the image of a curve of bi-degree (2,2) under the map
		\[
		|\mathcal{O}_{\mathbb{P}^{1}\times\mathbb{P}^{1}}(m,m)|:\mathbb{P}^{1}\times\mathbb{P}^{1}\to \mathbb{P}^{(m+1)^{2}-1}.
		\]
		The degree of the image is $ (2,2)\cdot(m,m) =4m$.
	\end{proof}
	In conclusion, $ V_{T} \cap H_{A}\cap H_{B} $ is reducible curve of degree $ m^{3} $ with a component of degree $ 4m $. In order to break the system with this information, the eavesdropper needs to find 
	\begin{enumerate}
		\item the irreducible decomposition of $V_{3,m}^{M_{A}}\cap H_{A}\cap H_B$;
		\item the irreducible component of degree $ 4m $ and compute its $ j $-invariant.
	\end{enumerate}

\begin{thebibliography}{99} 
		
			
		\bibitem{invariant binary quintic} 
		A. Abdesselam: A computational solution to a question by Beauville
		on the invariants of the binary quintic. Journal of Algebra 303, 771--788(2006)
		
			\bibitem{code}
		Berlekamp E. R., McEliece R. J., van Tilborg   H. C. A.: On the inherent intractability of certain coding problems. IEEE Trans. Information Theory IT-24(3), 384--386(1978)
		
		
			\bibitem{post}	
		Bernstein D. J., Buchmann J., Dahmen E. :Post-Quantum Cryptography, Springer-Verlag Berlin Heidelberg (2009)
		
		
		
		
		\bibitem{csidh}	
		Castryck W., Lange T., Martindale C., Panny L., Renes J.:CSIDH: An Efficient Post-Quantum Commutative Group Action. In: Peyrin T., Galbraith S. (eds) Advances in Cryptology - ASIACRYPT 2018. Lecture Notes in Computer Science, vol 11274. Springer, Cham, (2018)
		
		
			\bibitem{sidh}
		De Feo L., Jao D.,Pl\^{u}t J. :Towards quantum-resistant cryptosystems from supersingular elliptic curve isogenies. J. Math. Cryptol., 8, 209 -- 247(2014)
		
	
		
		\bibitem{Dolgachev Invariant Theory} Dolgachev, I. (2003) Lectures on Invariant Theory. CUP.
	
			\bibitem{Salmon}
			Salmon G.: Higher Algebra, fifth ed., 1885, reprinted by Chelsea, New York.
			\url{https://archive.org/details/lessonsintroduc00salmgoog/page/n210/mode/2up} (1964)
			
			\bibitem{Shafarevich} Shafarevich I. R.: Basic Algebraic Geometry 1, third ed. Springer, New York (2013)
		
	
		\bibitem{shor}
		Shor P. W.:Algorithms  for  quantum  computation:   Discrete  logarithmsand factoring. In: M. Robshaw and J. Katz, editors, Foundations of ComputerScience, CONFERENCE 1994, Proceedings., 35th Annual Symposium, pp. 124--134 (1994)
		
		
		
		\bibitem{nist2}	
		The National Institute of Standards and Technology (NIST).PQC Standardization Process: Second Round Candidate Announcement.	 (2019)
		
		
		\bibitem{nist1}	
		The National Institute of Standards and Technology (NIST).  Submission requirements  and  evaluation  criteria  for  the  post-quantum  cryptography  standardization process(2016)
		
		
		
		
		
	
		\bibitem{Ravi Vakil} 
		Vakil R.: The rising sea - Foundations of Algebraic Geometry. \url{http://math.stanford.edu/~vakil/216blog/FOAGnov1817public.pdf}
		
		
	
	
		
		
		
		%
		
		
	\end{thebibliography}
\end{document}